\newcommand{\de}{\mathrm{d}}
\DeclareMathOperator{\tf}{\mathfrak{t}}
\newcommand{\e}{\mathds{E}}
\newcommand{\R}{\mathds{R}}
\newcommand{\1}{\mathds{1}}
\newtheorem{lemma}{Lemma}
\begin{document}

\title{\LARGE \bf{Inhomogeneous mark correlation functions for general  marked point processes} }
\maketitle
\begin{center}
{{\bf Mehdi Moradi$^{1}$}} and {\bf Matthias Eckardt$^{2}$} \\
\noindent $^{\text{1}}$ Department of Mathematics and Mathematical Statistics, Ume\r{a} University, Ume\r{a}, Sweden\\
\noindent $^{\text{2}}$ Chair of Statistics, Humboldt-Universit\"{a}t zu Berlin , Berlin, Germany\\

\end{center}
\begin{abstract}
Spatial phenomena in environmental and biological contexts often involve events that are unevenly distributed across space and carry attributes, whose associations/variations are space-dependent. 
In this paper, we introduce the class of inhomogeneous mark correlation functions, capturing mark associations/variations, while explicitly accounting for the spatial inhomogeneity of events.
The proposed functions are designed to quantify how, on average, marks vary or associate with one another as a function of pairwise spatial distances.
We develop nonparametric estimators and evaluate their performance through simulation studies covering a range of scenarios with mark association or variation, spanning from nonstationary point patterns without spatial interaction to those characterised by clustering tendencies.
Our simulations reveal the shortcomings of traditional methods in the presence of spatial inhomogeneity, underscoring the necessity of our approach. Furthermore, the results show that our estimators accurately identify both the positivity/negativity and effective spatial range for detected mark associations/variations.
The proposed inhomogeneous mark correlation functions are then applied to two distinct forest ecosystems:  Longleaf pine trees in southern Georgia, USA, marked by their diameter at breast height, and Scots pine trees in Pfynwald, Switzerland, marked by their height. Our findings reveal that the inhomogeneous mark correlation functions provide deeper and more detailed insights into tree growth patterns compared to traditional methods.
\end{abstract}

{\it Keywords:  Inhomogeneous mark correlation; inhomogeneous mark Variogram; inhomogeneous pair correlation function; intensity function; Longleaf; Pfynwald}

\maketitle

\section{Introduction}

In ecological and environmental sciences, capturing spatially structured relationships between individual organisms is vital for understanding how ecosystems function and evolve.
In particular, measuring the strength and nature of space-dependent associations/variations between tree attributes is fundamental to many ecological studies in forestry. Such assessment is, in fact, crucial for uncovering how trees interact, compete for resources, and respond to environmental factors, thereby shaping forest structure and resilience  \citep{pommerening2013mark, pommerening2019individual}. Among these attributes, diameter at breast height (dbh) and the trees' height are key indicators of the trees' growth, health, and overall forest dynamics.
For instance, nearby trees may exhibit similar sizes due to localised competition for resources, indicating association, whereas the variation in tree growth rates across a forest may reflect environmental gradients such as moisture availability or sunlight exposure, revealing variation that helps identify habitat quality and forest health over space \citep{stoyan2000recent}. However, understanding these attributes and their space-dependent associations/variations is inherently linked to the spatial distribution of the trees, which is often uneven due to underlying environmental factors.

In this paper, our primary motivation is to investigate the potential space-dependent associations/variations in dbh and tree height across two different ecosystems:  Longleaf pine (Pinus palustris) trees in southern Georgia, USA, and Scots pine (Pinus sylvestris L) trees in Pfynwald, Switzerland. In the former case, Longleaf pines are marked by their dbh \citep{platt1988population}, while in the latter, Scots pines are marked by their height \citep{pfynwald:2016}.
Our aim is to offer deep ecological insights that go beyond simply studying the trees' locations. 
Tree locations and their associated attributes are inherently realisations of marked spatial point processes, where spatial points refer to the location of trees and marks refer to the attributes attached to trees.
In practice, such marked point processes are increasingly becoming heterogeneous, requiring methodologies that account for the spatial inhomogeneity in the locations of trees while studying the association/variation among their attributes. 
This is, in particular, the case of both the Longleaf and the Scots pine trees, as they exhibit significant inhomogeneity, reflecting inhomogeneous spatial distributions across the forest.
However, for such real-valued attributes, despite the need for suitable methods controlling for the inhomogeneity of the points, 
existing approaches to study mark association/variation, including various mark correlation functions \citep{
cressie93, StoyanStoyan1994, Illian2008, Baddeley2010}, are theoretically tied with the assumption of stationarity, an assumption that rarely holds in real applications \citep{Eckardt:Moradi:currrent}.
In essence, any such mark correlation function aims at measuring the association/variation among tree attributes, under the assumption that, for a fixed interpoint distance, each tree has an equal number of neighbouring trees.
Given this strong underlying assumption, which rarely holds in practice, the use of these mark correlation functions in settings with evident inhomogeneity in the distribution of the trees, such as the case of both Longleaf and Scots pines, might therefore lead to misleading/incomplete/poor conclusions about the mark association/variation among trees. 
Aiming to address this clear limitation, we first develop novel mark correlation functions that explicitly account for the complex inhomogeneity in the spatial distribution of trees, tending to be valuable tools for general biological and environmental applications.

Reviewing the literature on point processes with real-valued marks indicates that mark correlation functions generally fall into two main categories based on their underlying objective: those aimed at detecting significant mark association, capturing dependencies between marks as a function of pairwise spatial distances, and those focused on identifying significant mark variation, revealing systematic spatial patterns or trends in the marks. 
The former includes Isham's  \citep{isham1985marked},  
Stoyan's \citep{StoyanStoyan1994}, Beisbart's and Kerscher's mark correlation functions \citep{Beisbart:2000}, and Shimatani's and Schlather's $I$ functions \citep{Shimatani:MoranI, Schlather2004}. The latter includes mark covariance function  \citep{DBLP:journals/eik/Stoyan84}, mark variogram  \citep{cressie93
}, and mark differentiation function   \citep{20113358596}, 
which, even though they have distinct viewpoints, describe the average mark variation with respect to pairwise spatial distances between points. 
These functions were initially developed for stationary marked point processes in $\R^2$, mostly applied in environmental contexts such as forest ecology and forest fires. However, \cite{Eckardt:Moradi:currrent} proposed their counterpart for stationary marked point processes on linear networks, with applications to street trees. 
In real-world biological and environmental systems, the stationary assumption often fails, as the intensity of events frequently varies due to habitat heterogeneity, environmental gradients, or external disturbances.
Consequently, spatial variations in the distribution of the points may induce spurious indications of mark association, thereby obscuring the true underlying relationships and potentially resulting in erroneous inferences about mark association/variation. 

The problem with mark correlation functions for stationary marked point processes mirrors the difficulties encountered during the early developments of general summary statistics for unmarked point processes, such as Ripley's $K$-function, pair correlation function, and $J$-functions, which originally assumed spatial homogeneity before further advancements to address spatial heterogeneity. We also note that a mark-weighted version of Ripley's $K$-function was introduced by \cite{penttinen1992marked}, whose interpretation turns out not to be as easy as it merged two aspects: the spatial interaction between points in terms of clustering/repulsion, and association/variation among marks. 
\cite{InhomK2000} introduced a new class of point processes, called second-order intensity-reweighted stationary, for which inhomogeneous versions of the $K$-functions and pair correlation functions were defined. Similarly, \cite{van11} introduced the much-limited class of intensity-reweighted moment stationary point processes and developed an inhomogeneous version of the $J$-functions tailored to this framework. Poisson point processes naturally belong to these classes, and Cox point processes also fall within these categories, subject to specific conditions on the covariance function being satisfied.
Moreover, \cite{Ang:Baddeley:Nair2012} presented inhomogeneous $K$- and pair correlation functions for unmarked point processes on linear networks, and showed real applications where the inhomogeneous ones are more precise as they account for the underlying spatial distribution of points.

A detailed look at the construction of mark correlation functions for stationary marked point processes reveals that they are, in fact, formulated as the ratio of two pair correlation functions: the mark-weighted homogeneous pair correlation function in the numerator and the standard homogeneous pair correlation function in the denominator. This observation inspires our proposal to weight the existing mark correlation functions by a reciprocal factor based on the intensity function, generalising mark correlation functions to non-stationary settings. 
We propose non-parametric estimators, and design simulation studies with various scenarios in which the spatial distribution of points ranges from inhomogeneous patterns with no spatial interaction to inhomogeneous patterns exhibiting clustering tendencies. Within these scenarios, we consider distinct mark functions which impose mark association/variation among points. Our simulation studies, designed to evaluate the robustness and effectiveness of the proposed estimators with respect to the power of the test and type I error probability, demonstrate their applicability to a wide range of settings where spatial inhomogeneity is evident. Notably, our proposed inhomogeneous mark correlation functions outperform the existing ones not only in terms of the power of the test and type I error rates, but also in capturing the positivity/negativity and spatial range of mark association/variation.

The paper is organised as follows. 
Section \ref{sec:pre} provides some general details on marked point processes, followed by mark correlation functions for stationary point processes in general state spaces having real-valued marks.
In Section \ref{sec:inhommarkcorr}, we introduce inhomogeneous mark correlation functions for the same contexts and present their non-parametric estimators.
Section \ref{sec:simu} presents a simulation study in which we evaluate the performance of our proposed inhomogeneous mark correlation functions in various scenarios where the spatial distribution of points is not uniform, and mark association/variation is evident.
Section \ref{sec:apps} showcases our findings for Longleaf pine trees in southern Georgia, USA, and Scots pine (Pinus sylvestris L) forest trees in Pfynwald, Switzerland.
The paper ends with a discussion in Section \ref{sec:diss}.

\section{Preliminaries}\label{sec:pre}

Let $X=\{(x_i, m(x_i))\}_{i=1}^{N}, N<\infty,$ be a marked point process in a general state space $\mathbb{S}$, which is equipped with a Lebesgue measure defined by $|A| = \int_{A} \de u$, for $A \subseteq \mathcal{B}(\mathbb{S})$, where $\mathcal{B}$ is the collection of Borel sets, and endowed with a spatial distance metric $d_{\mathbb{S}}$. 
The general state space $\mathbb{S}$ is often the product of two spaces, one for the spatial locations $\{x_i\}_{i=1}^{N}$ denoted by $\mathbb{S}^{\mathrm{space}}$ and one for the corresponding marks $\{m(x_i)\}_{i=1}^{N}$ denoted by $\mathbb{S}^{\mathrm{mark}}=\mathbb{M}$ which is a Polish space endowed with a reference measure $\nu$ on its Borel $\sigma$-algebra $\mathcal{B}(\mathbb{M})$.
If the spatial locations live in $\mathbb{S}^{\mathrm{space}} = \R^2$ and marks are positive real numbers, we write $\mathbb{S} = \R^2 \times \R^+$ and choose $d_{\mathbb{S}}$ to be Euclidean distances, whereas if the spatial locations are on a linear network $\mathbb{L} \subset \R^2$, then $d_{\mathbb{S}}$ is often the shortest-path distance.
If the distribution of $X$ is translation invariant, meaning that, for every $a \in \mathbb{S}^{\mathrm{space}}$, $X$ and $X + a = \{(x_i +a, m(x_i))\}_{i=1}^{N}$ have the same distribution, then $X$ is called to be stationary. While this definition holds for $\mathbb{S}^{\mathrm{space}} = \mathbb{R}^d$, $d \geq 1$, a different approach is required when $\mathbb{S}^{\mathrm{space}} = \mathbb{L}$, as standard transformations may not preserve the structure of the linear network $\mathbb{L}$; see \citet{cronie2020inhomogeneous, Eckardt:Moradi:currrent} for details.

According to Campbell's formula, for any non-negative measurable function $h: \mathbb{S} \to  \R$, 
\begin{align*}
& 
\e 
\left[
\sum_{(x,m(x)) \in X}
h
(x,m(x))
\right]
= 
\int_{\mathbb{S}^{\mathrm{mark}}}
\int_{\mathbb{S}^{\mathrm{space}}}
h
(u,m(u))
\lambda
(u, m(u))
\de u \nu(\de m(u)),
\end{align*}
where $\lambda(\cdot,\cdot)$ is the intensity function of $X$, governing its distribution \citep{Daley2003}.
In fact, the intensity function $\lambda (u, m(u))$ gives the expected number of points per unit-size area 
in the vicinity of $(u, m(u)) \in \mathbb{S}$. If the intensity function is not constant, the point process $X$ is said to be inhomogeneous. The above Campbell’s formula can be extended to higher orders, yielding expressions that involve the $n$th-order product density functions $\lambda^{(n)}$, which describe the joint occurrences of $n$-tuples of points. When $n=2$, the second-order product density functions $\lambda^{(2)}$ takes the form
\begin{align*}
\lambda^{(2)}
   \big(
    (u, m(u)), (v, m(v))
   \big)  
   =
   f^{(2)}
   (
   m(u), m(v)
   | u, v
   )
   \lambda^{(2)} (u,v),
\end{align*}
where $\lambda^{(2)} (u,v)$ is the second-order product density function of the ground process, and $f^{(2)}$ is the conditional density of marks. From here on, when such functions are written with arguments consisting solely of spatial locations or solely of marks, they will be understood to refer to the ground process or the mark process, respectively.
A less restrictive condition than stationarity refers to second-order intensity-reweighted stationary, which means the pair correlation function
\begin{align*}
    g
    \big(
    (u, m(u)), (v, m(v))
    \big)
    &=
    \frac{
    \lambda^{(2)}
    \big(
    (u, m(u)), (v, m(v))
    \big)
    }{
    \lambda
    (u, m(u))
    \lambda
    (v, m(v))
    }
    =
    \frac{
    f^{(2)}
   \big(
   m(u), m(v)
   | u, v
   \big)
    }{
    f
   \big(
   m(u)
   | u
   \big)
   f
   \big(
   m(v)
   | v
   \big)
    }
    \frac{
    \lambda^{(2)} (u,v)
    }{
    \lambda (u)
    \lambda (v)
    }
    \\
    &=
    \gamma^{(2)}
    \big(
    m(u), m(v) \big| u, v
    \big)
    g(u,v),
\end{align*}
only depends on the spatial distance between the points. In other words,
\begin{align}\label{eq:2ndorder}
    g
    \big(
    (u, m(u)), (v, m(v))
    \big)
    =  
    \bar{g} 
    (
    d_{\mathbb{S}} (u, v)
    )
    \gamma^{(2)}
    \big(
    m(u), m(v) \big| u, v
    \big),
\end{align}
where $\bar{g}: \R \to \R$ \citep{InhomK2000, cronie2024discussion}; note the distinctions between the cases $\mathbb{S}^{\mathrm{space}} = \R^d$ and $\mathbb{S}^{\mathrm{space}} = \mathbb{L}$ \citep{cronie2020inhomogeneous, Eckardt:Moradi:currrent}. 
Summary statistics such as the inhomogeneous $K$- and pair correlation functions are defined for this class of point processes. Note the relationship between the $K$- and pair correlation functions according to which, $g(r) = K^{\prime}(r)/(2\pi r)$, if $\mathbb{S}^{\mathrm{space}} = \R^2$, where $K^{\prime}(r)$ is the derivative of the $K$-function. 

\subsection{Mark correlation functions for stationary point processes}

{
Prior to introducing mark correlation functions for inhomogeneous marked point processes, which are given in Section \ref{sec:inhommarkcorr}, it is essential to review the underlying construction principle of these functions for the stationary marked point processes.
} For every two points $(x, m(x)), (y, m(y)) \in X$, conditional on having an interpoint distance $d_{\mathbb{S}}(x, y)=r$, unnormalised mark correlation functions, formalised as a conditional expectation in the Palm sense, are defined as
\begin{align}\label{eq:ctf}
    c_{\tf_f}(r)
    =
    \e 
    \left[
    \tf_f 
    \left(
    m(x), m(y)
    \right)
    \Bigl\vert 
    (x, m(x)), (y, m(y)) \in X
    \right],
\end{align}
where $\tf_f: \mathbb{M} \times \mathbb{M} \to \R^+$ is a so-called test function in its most general form. 
Given the conditional nature of $c_{\tf_f}(r)$, {\eqref{eq:ctf} is commonly rewritten in the form of}
\begin{align}\label{eq:ctf:2ndorderproducts:ratio}
c_{\tf_f}(r)
=
\frac{
\lambda^{(2)}_{\tf_f}
(r)}{
\lambda^{(2)}
(r)},
\end{align}
where $\lambda^{(2)}_{\tf_f}$ is the density of the factorial moment measure
\begin{align*}
\alpha_{\tf_f}(B_1, B_2)
=
\e
\bigg[
\sum_{(x,m(x)), (y, m(y)) \in X}^{\neq}
\tf_f 
(
m(x), m(y)
)
\1_{B_1} \{ x \}
\1_{B_2} \{ y \}
\bigg], 
\quad
B_1, B_2 \in \mathcal{B}(\mathbb{S}^{\mathrm{space}}),
\end{align*}
and $\lambda^{(2)}
(r)$ is the second-order product density function of the ground process, i.e., the unmarked version of $X$ \citep[Chapter 5]{Illian2008}. {Importantly, recalling the definition of the pair correlation and noting that constant terms cancel out, we point to the fact that \eqref{eq:ctf:2ndorderproducts:ratio} can also be formalised as the ratio of two pair correlation functions.} In other terms, regardless of the choice $\tf_f$, the unnormalised mark correlation function $c_{\tf_f}(r)$ is just the ratio of two homogeneous pair correlation functions: the  $\tf_f$-weighted homogeneous pair correlation function of the marked point process $X$
and the homogeneous 
pair correlation function of the ground process, i.e. the unmarked version of $X$. Note that if one writes  \eqref{eq:ctf:2ndorderproducts:ratio} as such ratio, the denominators cancel each other out, and thus we get the ratio based on product density functions.
Under 
mark independence, {which is assumed} 
when $r\rightarrow\infty$, we have
\begin{eqnarray}\label{eq:normfact}
    c_{\tf_f}
    =
    c_{\tf_f}(\infty)
    =
    \int_{\mathbb{M}}
    \int_{\mathbb{M}}
    \tf_f(m(x), m(y)) \nu(\de m (x)) \nu(\de m(y)).
\end{eqnarray}
which consequently gives rise to the normalised mark correlation functions
\begin{align}
\kappa_{\tf_f}(r)
=
\frac{
c_{\tf_f}(r)
}{
c_{\tf_f}
},
\end{align}
with a value of one under the mark independence assumption. Thus, any significant deviation of $\kappa_{\tf_f}(r)$ from one indicates positive/negative mark association/variation depending on the test function $\tf_f$ employed. 
The most frequently used test functions $\tf_f$ include 
\begin{align}
    \tf_f 
    \left(
    m(x), m(y)
    \right)
    =
    m(x) m(y),
\end{align}
and
\begin{align}
 \tf_f 
    \left(
    m(x), m(y)
    \right)
    =
    0.5 (m(x) - m(y))^2,   
\end{align}
establishing Stoyan's mark correlation function $\kappa_{mm}$ and the mark variogram $\gamma_{mm}$ with normalisation factors 
$\mu_m^2$, where $\mu_m =\sum_{(x,m(x)) \in X} m(x)/N$, and $\sigma^2_m = \sum_{(x,m(x)) \in X} (m(x) - \mu_m)^2/(N-1)$ \citep{cressie93, StoyanStoyan1994, LIMA}.  
A review and a simulation-based comparison of all well-known such test functions can be found in 
\cite{Eckardt2024Rejoinder}.

While in the literature, following different perspectives, there are various mark correlation functions in the form of \eqref{eq:ctf}, all such functions require the underlying processes to be stationary with test functions $\tf_f$ just being functions of marks, ignoring the potential heterogeneity in the spatial distribution of points. 
Consequently, since real-world datasets are rarely uniformly distributed, conclusions about mark association and/or variation derived from the mark correlation functions $c_{\tf_f}(r)$ and $\kappa_{\tf_f}(r)$ may lack completeness.

\section{Mark correlation functions for inhomogeneous point processes}\label{sec:inhommarkcorr}

In this section, recalling \eqref{eq:2ndorder}, we let $X=\{(x_i, m(x_i))\}_{i=1}^{N}, N<\infty$, be a second-order intensity-reweighted stationary marked point process. 
To account for the spatial distribution of points, we propose to employ a weight factor $w(x,y)=1/\lambda(x)\lambda(y)$, in which $\lambda(x)$, here, refers to the intensity function of the unmarked version of $X$ evaluated at $x \in \mathbb{S}^{\mathrm{space}}$, compensating for the inhomogeneity in the spatial distribution of points. 
More specifically,  we redefine \eqref{eq:ctf} as 
\begin{align}\label{eq:ctfinhom}
    c_{\tf_f}^{\textrm{inhom}}(r)
    =
    \e_w
    \left[
     \tf_f 
    \left(
    m(x), m(y)
    \right)
    \bigg | 
    (x, m(x)), (y, m(y)) \in X
    \right],
\end{align}
where $\e_w$ is a conditional expectation with respect to a weighted Palm distribution. Therefore, \eqref{eq:ctfinhom} can be rewritten as
\begin{align}
c_{\tf_f}^{\textrm{inhom}}(r)
=
\frac{
\lambda^{(2)}_{w, \tf_f}
(r)}{
\lambda^{(2)}_{w}
(r)},
\end{align}
where $\lambda^{(2)}_{w, \tf_f}$ is the density of the $w$-weighted factorial moment measure $\alpha^{\textrm{inhom}}_{\tf_f}(\cdot, \cdot)$ given as
\begin{align*}
\alpha^{\textrm{inhom}}_{\tf_f}(B_1, B_2)
=
\e
\bigg[
\sum_{(x,m(x)), (y, m(y)) \in X}^{\neq}
\frac{
\tf_f 
(
m(x), m(y)
)
\1_{B_1} \{ x \}
\1_{B_2} \{ y \}
}{
\lambda(x) \lambda(y)
}
\bigg], 
\quad
B_1, B_2 \in \mathcal{B}(\mathbb{S}^{\mathrm{space}}),
\end{align*}
and $\lambda^{(2)}_w$ is the ($w$-weighted) second-order product density function of the ground process.
Interestingly, this shows that unnormalised inhomogeneous mark correlation functions can be naturally expressed as the ratio of the mark-weighted inhomogeneous pair correlation function to the inhomogeneous pair correlation function of the ground process, closely aligning with their definition in the stationary case. 
Further, given 
the theoretical relationships between the pair correlation,  
$K$- and $J$-functions \citep{InhomK2000, van11}, this 
formulation extends to inhomogeneous $K$- and $J$-functions in the marked settings. Therefore, one can alternatively define novel mark correlation functions based on inhomogeneous $K$- and $ J$-functions, opening novel thoughts to interpret results based on their mark-weighted versions. 
For instance, in the case of the inhomogeneous $K$-function, one can define an unnormalised inhomogeneous mark correlation function in the form of
\begin{align}\label{eq:ctfK}
    \frac{
    K^{\mathrm{inhom}}_{\tf_f}(r)
    }{
    K^{\mathrm{inhom}}(r)
    }
    =
    \frac{
    \e 
    \big[
    \sum_{(x,m(x)) \in X} 
    \tf_f 
    (
    m(x), m(y)
    )
    \1
    \{
    d_{\mathbb{S}}(x,y) \leq r
    \}
    e(x,y)
    \big/
    \lambda(x)
    \big|
    (y, m(y)) \in X
    \big]
    }{
    \e
    \big[
    \sum_{(x,m(x)) \in X}
    \1
    \{
    d_{\mathbb{S}}(x,y) \leq r
    \}
    e(x,y)
    \big/
    \lambda(x)
    \big|
    (y, m(y)) \in X
    \big]
    },
\end{align}
where $e(x,y)$ is an edge correction compensating for the lack of information near the boundary of the observation window \citep{InhomK2000}, and $\1$ is an indicator function. 

\begin{lemma}\label{lem:lemma}
Let $X$ be a second-order intensity-reweighted stationary marked point process in $\mathbb{S}$. Given that, in \eqref{eq:ctfinhom}, $d_{\mathbb{S}}(x,y) = r$ and the test function is of the form $\tf_f (m(x),m(y)) = m(x) m(y)$, the normalising factor becomes $\mu_m^2$, where $\mu_m =\sum_{(x,m(x)) \in X} m(x)/N$.
\end{lemma}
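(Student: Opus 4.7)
My plan is to compute $c_{\tf_f}^{\textrm{inhom}}(\infty)$ by collapsing the ratio in its definition to an integration of the test function against the joint mark density, mirroring the derivation that produces \eqref{eq:normfact} in the stationary case, and then specialising to $\tf_f(m(x),m(y)) = m(x)m(y)$.

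First, I would apply the second-order Campbell formula to the $w$-weighted factorial moment measure $\alpha^{\textrm{inhom}}_{\tf_f}$ displayed just after \eqref{eq:ctfinhom}, combined with the paper's decomposition
$$
\lambda^{(2)}\big((u,m(u)),(v,m(v))\big) = f^{(2)}\big(m(u),m(v)\mid u,v\big)\,\lambda^{(2)}(u,v).
$$
This yields
$$
\lambda^{(2)}_{w,\tf_f}(u,v) = \frac{\lambda^{(2)}(u,v)}{\lambda(u)\lambda(v)} \int_{\mathbb{M}}\int_{\mathbb{M}} \tf_f(m_1,m_2)\,f^{(2)}\big(m_1,m_2\mid u,v\big)\,\nu(\de m_1)\,\nu(\de m_2),
$$
while $\lambda^{(2)}_{w}(u,v) = \lambda^{(2)}(u,v)/(\lambda(u)\lambda(v))$ follows from the same formula with $\tf_f \equiv 1$. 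The intensity factors cancel in the ratio, leaving
$$
c_{\tf_f}^{\textrm{inhom}}(r) = \int_{\mathbb{M}}\int_{\mathbb{M}} \tf_f(m_1,m_2)\,f^{(2)}\big(m_1,m_2\mid u,v\big)\,\nu(\de m_1)\,\nu(\de m_2),
$$
which depends only on $r = d_{\mathbb{S}}(u,v)$ by second-order intensity-reweighted stationarity, via the factorisation of $\gamma^{(2)}$ through the distance in \eqref{eq:2ndorder}.

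I would then pass to $r\to\infty$ and invoke mark independence, so that $f^{(2)}(m_1,m_2\mid u,v)\to f(m_1)f(m_2)$. For $\tf_f(m_1,m_2)=m_1 m_2$ the double integral splits into two identical factors, each equal to the mean mark, giving
$$
c_{mm}^{\textrm{inhom}}(\infty) = \left(\int_{\mathbb{M}} m\,f(m)\,\nu(\de m)\right)^{2} = \mu_m^2,
$$
whose natural empirical counterpart is precisely $\mu_m = \sum_{(x,m(x))\in X} m(x)/N$.

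The only delicate step is the interchange of limit and integration under mark independence, which rests on a standard dominated-convergence argument with a finite second-moment assumption on the marks; this is the same technical ingredient that underlies \eqref{eq:normfact} in the stationary case. The genuinely new content of the lemma therefore lies in the bookkeeping showing that the weight $1/(\lambda(u)\lambda(v))$ and the ground-process product density $\lambda^{(2)}(u,v)$ cancel cleanly between numerator and denominator, so that the inhomogeneous normalising factor coincides with the stationary one.
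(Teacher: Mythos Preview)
Your proposal is correct and follows essentially the same route as the paper: both apply the second-order Campbell formula together with the decomposition $\lambda^{(2)}\big((u,m(u)),(v,m(v))\big)=f^{(2)}(m(u),m(v)\mid u,v)\,\lambda^{(2)}(u,v)$, then invoke mark independence to factor $f^{(2)}$ and obtain $\mu_m^2$. The only cosmetic difference is that you cancel the common factor $\lambda^{(2)}(u,v)/(\lambda(u)\lambda(v))$ pointwise before integrating, whereas the paper carries the spatial integrals $\int\int \1\{d_{\mathbb{S}}(u,v)=r\}\,g(u,v)\,\de u\,\de v$ through separately in numerator and denominator and cancels them at the end.
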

\begin{proof}
By employing Campbell's formula, we can write 
\begin{align*}
\e
[
\lambda^{(2)}_{w, \tf_f}
(r)
]
&=
\int \int \int \int
\frac{
m(u)
m(v)
\1 \{ d_{\mathbb{S}}(u,v) = r \}
\lambda^{(2)}
(
(u, m(u)),(v, m(v))
)
}{
\lambda(u)
\lambda(v)
}
\de u
\de m (u)
\de v
\de m(y)
\\
&=
\int \int \int \int
\frac{
m(u)
m(v)
\1 \{ d_{\mathbb{S}}(u,v) = r \}
f^{(2)}
(
m(u), m(v) | u,v
)
\lambda^{(2)}
(
(u),(v)
)
}{
\lambda(u)
\lambda(v)
}
\de u
\de m (u)
\de v
\de m(y)
\\
&=
\int \int \int \int
m(u)
m(v)
\1 \{ d_{\mathbb{S}}(u,v) = r \}
f^{(2)}
(
m(u), m(v) | u,v
)
g(u,v)
\de u
\de m (u)
\de v
\de m(y)
\\
&=
\int \int \int \int
m(u)
m(v)
\1 \{ d_{\mathbb{S}}(u,v) = r \}
f(m(u))
f(m(v))
g(u,v)
\de u
\de m (u)
\de v
\de m(y)
\\
&=
\int m(u) f(m(u)) \de m (u)
\int m(v) f(m(v)) \de m (v)
\int \int
\1 \{ d_{\mathbb{S}}(u,v) = r \} g(u,v) \de u \de v
\\
&=
\mu^2_m
\int \int
\1 \{ d_{\mathbb{S}}(u,v) = r \}
g(u,v) \de u \de v,
\end{align*}
where, under mark independence, we can write
$
f^{(2)}
(
m(u), m(v) | u,v
)
=
f(m(u))
f(m(v))
$. Similarly, we can show that 
\begin{align*}
 \e
[
\lambda^{(2)}_{w}
(r)
]
=
\int \int
\1 \{ d_{\mathbb{S}}(u,v) = r \}
g(u,v) \de u \de v,
\end{align*}
leading to the final result that when the test function is $\tf_f (m(x),m(y)) = m(x) m(y)$, the normalising factor will become $\mu_m^2$. Note that, in practice, similar to the case of pair correlation functions,  $\1 \{ d_{\mathbb{S}}(u,v) = r \}$ will be replaced by $\1 \{ r - \epsilon \leq d_{\mathbb{S}}(u,v) \leq r + \epsilon \}$ for some very small $\epsilon >0$.
\end{proof}

In a similar manner as in Lemma \ref{lem:lemma}, one can show that if in \eqref{eq:ctfinhom}, the test function takes the form $\tf_f (m(x),m(y)) = 0.5(m(x)-m(y))^2$, establishing the inhomogeneous mark variogram, the normalisation factor becomes the variance of the marks. We emphasise that the normalisation factors will be the same if one instead of \eqref{eq:ctfinhom} uses the $K$-function-based mark correlation function given in \eqref{eq:ctfK}.
Note that, under homogeneity, where the intensity function is constant, both \eqref{eq:ctfinhom} and \eqref{eq:ctfK} reduce to their classical version originally defined for homogeneous marked point processes.

\subsection{Estimation}
This section is devoted to presenting estimators for $c_{\tf_f}^{\textrm{inhom}}(r)$ when the marked point process $X$ is observed in a window $W \subset \mathbb{S} = \R^2 \times \R^+$; extending the estimators to other state spaces will be straightforward. 
Unbiased estimators of terms involved in $c_{\tf_f}^{\textrm{inhom}}(r)$ are given as
\begin{eqnarray}\label{eq:est1}
\widehat{\lambda}^{(2)}_{w, \tf_{f}}(r)
&=&
\frac{1}{2\pi r |W|}
\sum_{(x, m(x)), (y, m(y)) \in X}^{\neq}
\frac{
\tf_{f}
\big(
m(x),m(y)
\big)
\mathcal{K}
\big(
d_{\mathbb{S}}(x,y) - r
\big)
e (x,y)
}{
\lambda (x)  \lambda (y)
}, 
\end{eqnarray}
and 
\begin{eqnarray}\label{eq:est2}
\widehat{\lambda}^{(2)}_{w}(r)
&=&
\frac{1}{2\pi r |W|}
\sum_{(x, m(x)), (y, m(y)) \in X}^{\neq}
\frac{
\mathcal{K}
\big(
d_{\mathbb{S}}(x,y) - r
\big)
e (x,y)
}{
\lambda (x)  \lambda (y)
}, 
\end{eqnarray}
leading to the ratio-unbiased estimator
\begin{align}\label{eq:est3}
\widehat{c}_{\tf_f}^{\textrm{inhom}}(r) 
=
\frac{
\widehat{\lambda}^{(2)}_{w, \tf_{f}}(r)
}{
\widehat{\lambda}^{(2)}_w(r)
},
\end{align}
where $\mathcal{K}(\cdot)$ is a kernel function, $e(x, y)$ is an edge correction factor,
and $|W|$ is the area of the observation window $W$. Taking a closer look
at the estimator \eqref{eq:est2}, one can see that this is in fact the unbiased estimator of the inhomogeneous pair correlation function of the ground process \citep{InhomK2000},  with \eqref{eq:est1} being its mark-weighted version. Note that, in \eqref{eq:est3}, the factor $1/ 2\pi r |W|$ vanishes. Looking at the estimators for general summary statistics for point processes, the frequently considered edge correction factors are known as `translation' and `Ripley' \citep[Chapter 7]{Baddeley2015}. We add that the form of edge corrections varies depending on the state space $\mathbb{S}^{\mathrm{space}}$; see \cite{moradi2018spatial} for further details on this matter when  $\mathbb{S}^{\mathrm{space}} =  \mathbb{L}$.

In practice, the intensity function in \eqref{eq:est1} and \eqref{eq:est2} is unknown, and thus, we need to estimate it from the observed point pattern.
Within the literature, various kernel- and Voronoi-based intensity estimators are presented \citep{Baddeley2015, MoradiVor2019}. 
The two most frequently used kernel-based intensity estimators for unmarked spatial point patterns are
\begin{equation}
\label{e:kde.2D.unif}
\widehat \lambda^{\text{U}}(u)
= 
\frac{1}{c_{W}(u)}
\sum_{(x, m(x)) \in X}
\mathcal{K}(u - x),
\quad u \in W \subset \mathbb{S}^{\mathrm{space}},
\end{equation}
and
\begin{equation}
\label{e:kde.2D.JD}
\widehat \lambda^{\text{JD}}(u)
= 
\sum_{(x, m(x)) \in X}
\frac{\mathcal{K}(u - x)}{c_{W}(x)},
\quad u \in W \subset \mathbb{S}^{\mathrm{space}},
\end{equation}
where \(\mathcal{K}\) is a kernel function, and
\begin{equation}
\label{e:cW}
c_{\sigma,W}(u)
= 
\int_W \kappa_{\sigma}(u - v) \mathrm{d}v,
\quad u \in W \subset \mathbb{S}^{\mathrm{space}},
\end{equation}
is an edge correction factor compensating for the unseen data outside the observation window. Several bandwidth selection approaches, from different perspectives, are available within the literature to choose the bandwidth parameter associated with the kernel function $\mathcal{K}$. Note that, regardless of the choice of bandwidth, the estimator \eqref{e:kde.2D.unif} is unbiased under homogeneity, while the estimator \eqref{e:kde.2D.JD} conserves the mass. In the case of Voronoi-based intensity estimators, we have the resample-smoothed estimator
\begin{align}
\label{SmoothVor}
\widehat{\lambda}_{p,m}^{V}(u)
=
\frac{1}{m}\sum_{i=1}^m
\frac{\widehat{\lambda}_i^{V}(u)}{p}
,\quad
\ u\in W \subset \mathbb{S}^{\mathrm{space}},
\end{align}
which is based on averaging $m$ rescaled Voronoi estimators of $p$-thinned point processes obtained from the point process $X$. Here, \(\widehat{\lambda}_i^{V}(u)\) corresponds to the Voronoi-based intensity estimator of the \(i\)-th $p$-thinned pattern \citep{MoradiVor2019}. This estimator is unbiased under the assumption of homogeneity and satisfies the mass conservation property simultaneously. A review of different kernel- and Voronoi-based intensity estimators, together with discussing various bandwidth selection methods, can be found in \cite{mateu2024non}.

\section{Simulation study}\label{sec:simu}

In this section, we evaluate the ability of our proposed inhomogeneous mark correlation functions to detect mark associations/variations across a range of scenarios, specifically assessing deviations from the assumption of random labelling. We consider inhomogeneous marked point processes observed within the unit square window $[0,1]^2$, focusing on scenarios where there is either i) mark association, or ii) mark variation among points. For scenarios involving mark association, we use the test function $\tf_f(m_1,m_2) = m_1 m_2$ with $m_1=m(x)$ and $m_2=m(y)$, giving rise to the inhomogeneous mark correlation function $\kappa^{\mathrm{inhom}}_{mm}$. In contrast, to capture mark variation, we apply the test function $\tf_f(m_1,m_2) = 0.5 (m_1 - m_2)^2$, giving rise to the inhomogeneous mark variogram $\gamma^{\mathrm{inhom}}_{mm}$. For each scenario, we generate 100 point patterns, and, using global envelop tests with 1000 permutations \citep{myllymaki2017global}, we study the power of the test, type I error probability, as well as the range and positivity/negativity of detected mark associations/variations. Concerning the study of type I error probability, we use the same generated point patterns in each scenario, but the marks will be replaced by random numbers generated between 0 and 1, representing random labelling. Moreover, we compare the performance of our proposed inhomogeneous mark correlation functions with that of the homogeneous ones, denoted by $\kappa_{mm}$ and $\gamma_{mm}$. We add that global envelop tests are performed using the completely non-parametric rank envelope test based on extreme rank lengths, known as `\textsf{erl}', at the significance level $0.05$. As with practical applications, the intensity function must be estimated. Therefore, in our simulation studies, we also estimate the intensity function. More specifically, we use the kernel-based intensity estimator \eqref{e:kde.2D.JD} combined with Cronie and van Lieshout's criterion for choosing the smoothing bandwidth parameter \citep{cronie2018non}. 

\subsection{Mark association}\label{sec:asso}

Here, we consider the mark distribution $m(u)=m(x,y)=\sin(x^2+y^2)$ which induces associations among the marks of nearby points. Under this framework, we explore two scenarios in which the spatial distribution of points ranges from inhomogeneous Poisson processes to Log-Gaussian Cox processes. Given the considered mark distribution and the spatial distribution of points, on average, a positive association for moderate ranges of spatial distances is expected.

\subsubsection{Inhomogeneous Poisson point processes}

We consider an inhomogeneous Poisson point process,
with an intensity function given by $\lambda(u)=\lambda(x,y)= 50\exp(\sin(4x^2 + 4y^2))$. Regarding the power of the test, we find that the inhomogeneous mark correlation function $\kappa_{mm}^{\textrm{inhom}}$ detects the association of the mark for all 100 simulated point patterns, while the homogeneous mark correlation function $\kappa_{mm}$ identifies it only in $15\%$ of the cases. As a showcase, in Figure \ref{fig:assoinhomPoiss}, we show the results of global envelope tests for one of the 100 simulated patterns. From the pattern, one can see that nearby points have similar marks, highlighting their association, whereas the marks become more dissimilar if we increase the spatial interpoint distance. However, $\kappa_{mm}$ shown in the right plot does not identify any deviation from random labelling. In contrast, $\kappa_{mm}^{\textrm{inhom}}$ clearly identifies the positive mark association among points, representing the true mark association among points. In this particular point pattern, due to the inhomogeneity of the pattern, there is a higher frequency of pairs of points with small mark products at very short distances, compared to those with larger products. Thus, $\kappa_{mm}^{\textrm{inhom}}$ stays within the envelope for very small spatial distances, reflecting the empirical spatial/mark distribution of points. Note also the differences in their fluctuations across different spatial distance ranges.
\begin{figure}[!h]
    \centering
    \includegraphics[scale=0.19]{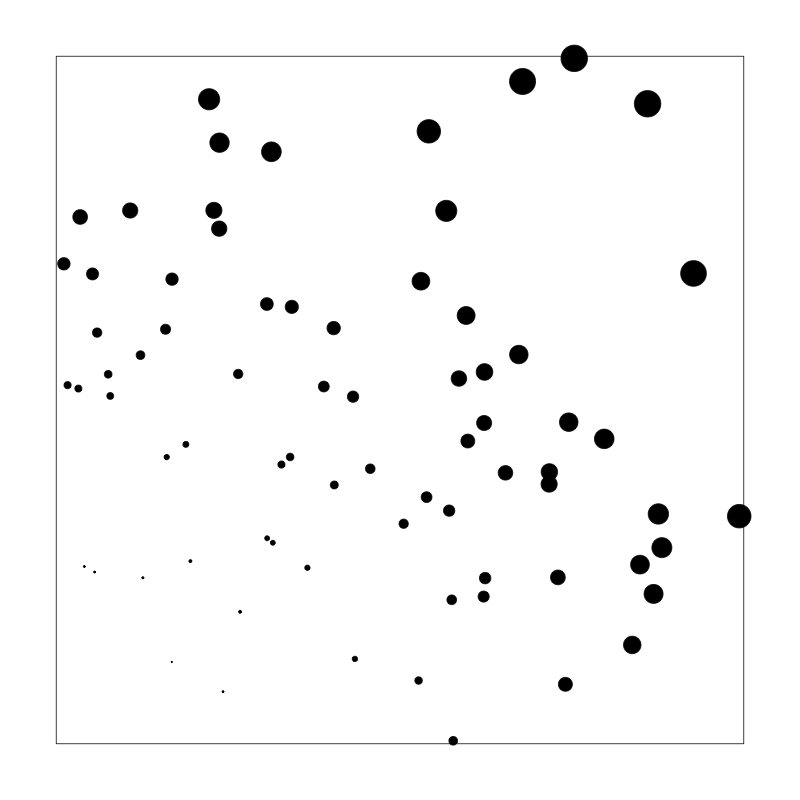}
    \includegraphics[scale=0.18]{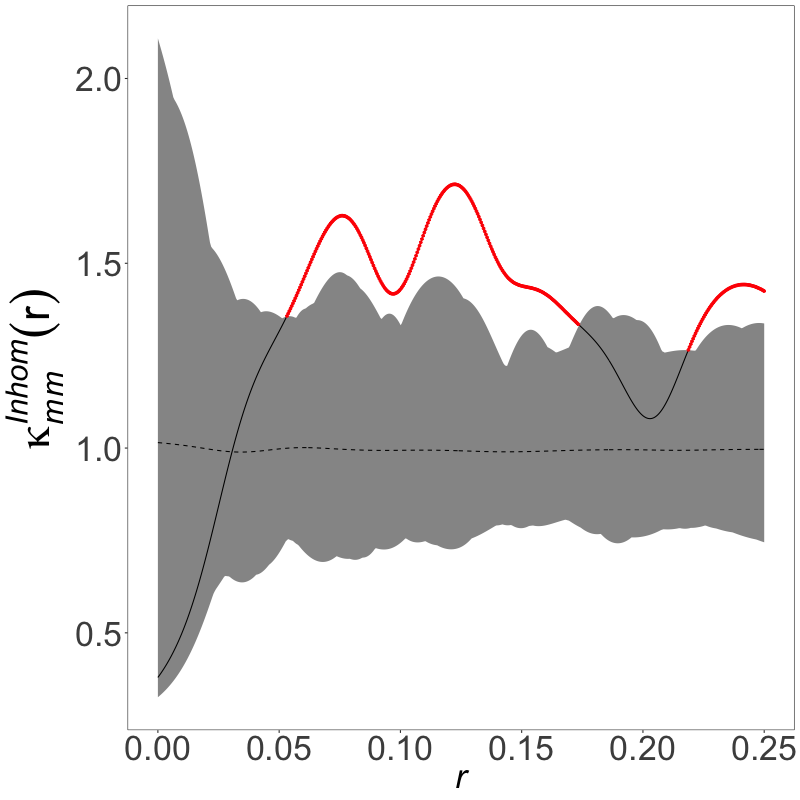}
    \includegraphics[scale=0.18]{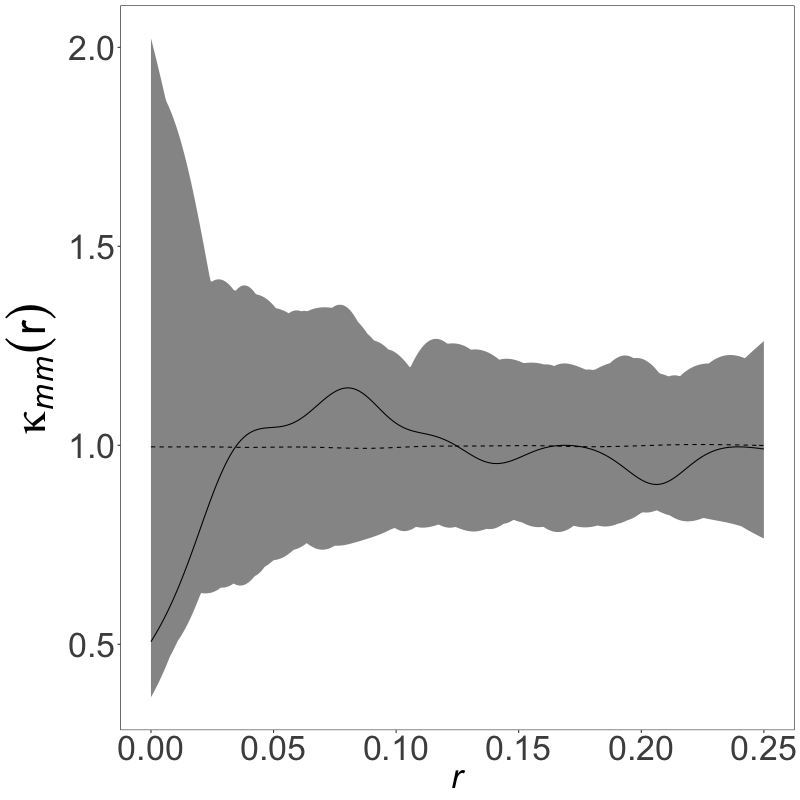}
    \caption{
    One of the 100 simulated point patterns, together with its corresponding global envelope tests based on $\kappa_{mm}^{\textrm{inhom}}$ and $\kappa_{mm}$.
    }
    \label{fig:assoinhomPoiss}
\end{figure}

Turning to the type I error probability, we find that both $\kappa_{mm}^{\textrm{inhom}}$ and $\kappa_{mm}$ mistakenly detected some mark association in $9\%$ of the patterns. Going through the patterns identified as having significant mark associations by $\kappa_{mm}^{\textrm{inhom}}$ and $\kappa_{mm}$, we observe that for some of them, estimated mark correlation functions are only slightly surpassing the envelope for very short ranges of distances. Otherwise, their type I error probability stays around $5\%$.

\subsubsection{Log-Gaussian Cox point processes}

Here, we consider a log-Gaussian Cox point process where the driving Gaussian random field has a mean function $(x,y) \mapsto \log(90) + \sin(4x^2 + 4y^2) - 1$, and an exponential covariance function $((x_1,y_1),(x_2,y_2)) \mapsto 1.5 \exp ( - || (x_1,y_1) - (x_2,y_2) || / 0.12 )$. Hereby, the intensity function is given as $\lambda(u)=\lambda(x,y)=90 \exp ( \sin(4x^2 + 4y^2) - 0.25)$. In this scenario, we find that $\kappa_{mm}^{\textrm{inhom}}$ correctly identifies the assigned mark association in $92\%$ of the patterns, whereas $\kappa_{mm}$ is successful only in $53\%$ of the cases. Figure \ref{fig:assoinhomlgcp} shows the results for one of the simulated point patterns. From the point pattern, the inhomogeneity and mark association can be observed. For this specific pattern, $\kappa_{mm}$ completely stays inside the envelope and detects no deviation from random labelling for any range of distances. In contrast, $\kappa_{mm}^{\textrm{inhom}}$ correctly detects the positive mark association assigned to the points for different ranges of spatial distances, mimicking the spatial/mark distribution of the points. The type one error probabilities for $\kappa_{mm}^{\textrm{inhom}}$ and $\kappa_{mm}$ are estimated as $6\%$ and $4\%$, respectively. Also, it seems that $\kappa_{mm}^{\textrm{inhom}}$ have a lower variation than $\kappa_{mm}$.

\begin{figure}[!h]
    \centering
    \includegraphics[scale=0.19]{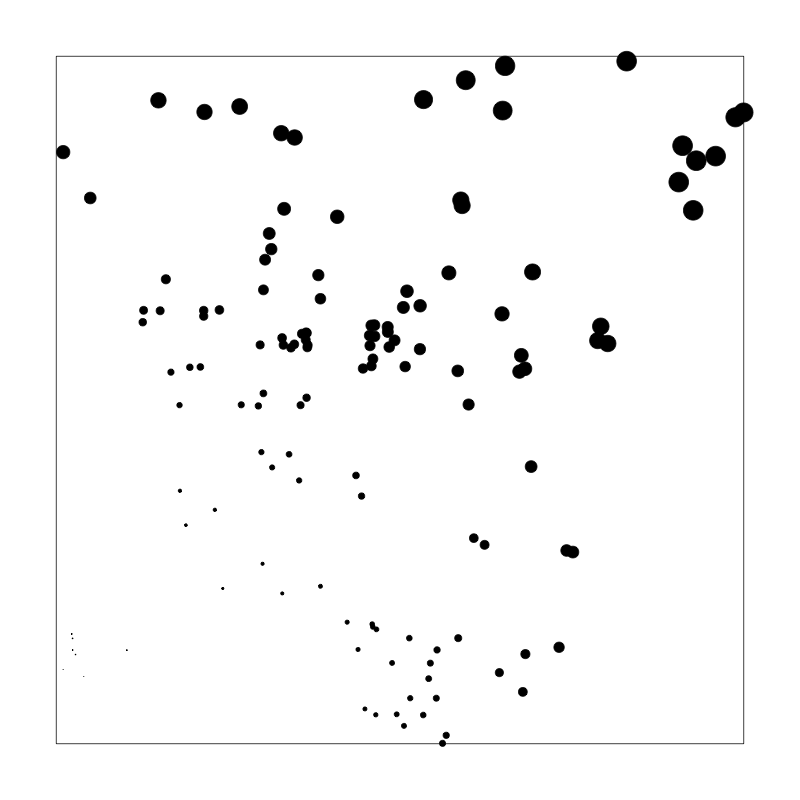}
    \includegraphics[scale=0.18]{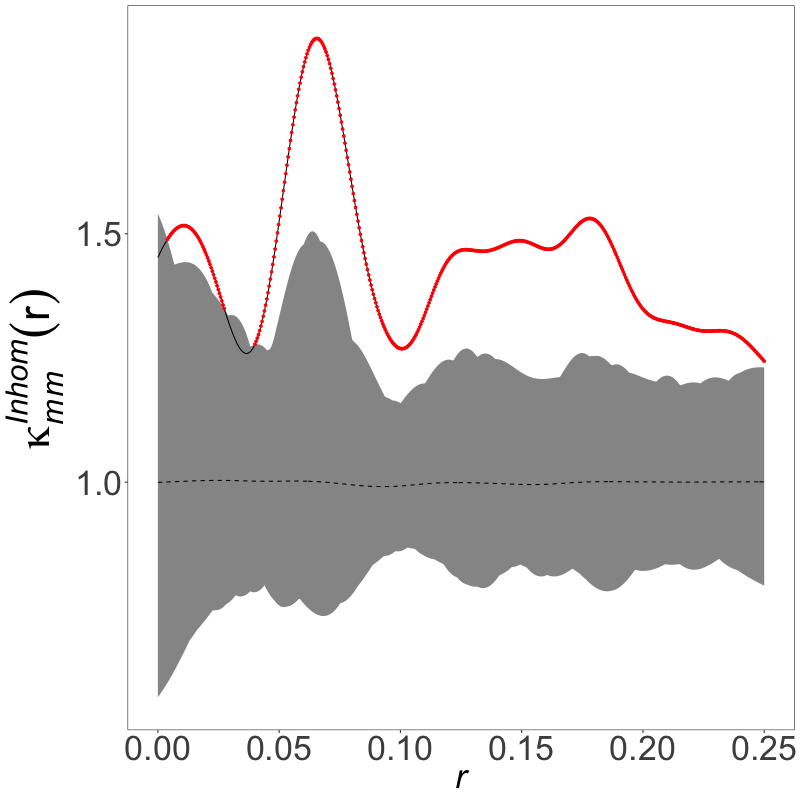}
    \includegraphics[scale=0.18]{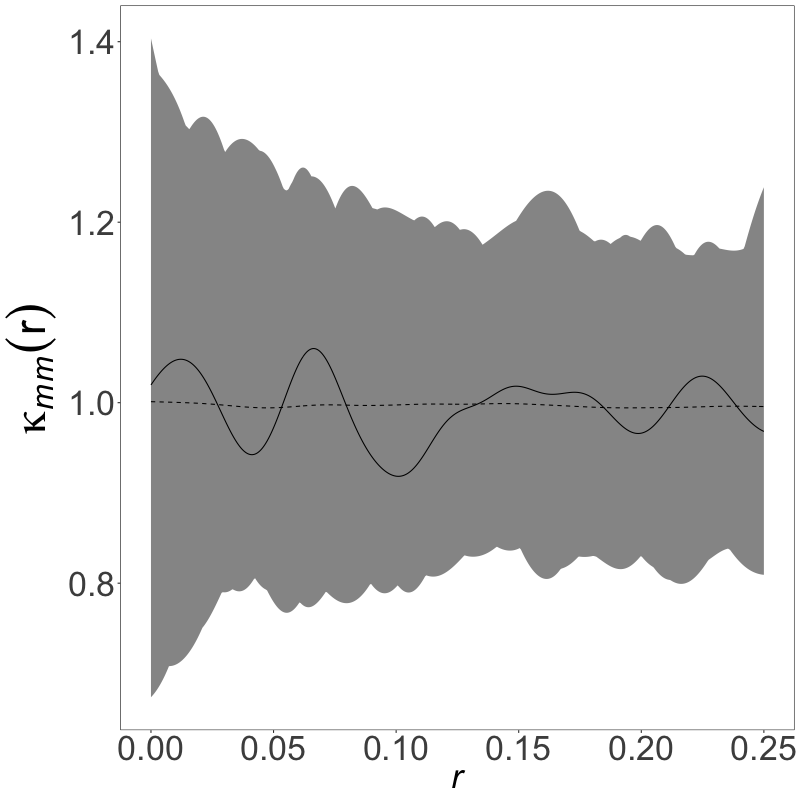}
    \caption{
    One of the 100 simulated point patterns, together with its corresponding global envelope tests based on $\kappa_{mm}^{\textrm{inhom}}$ and $\kappa_{mm}$.
    }
    \label{fig:assoinhomlgcp}
\end{figure}

\subsection{Mark variation}

To study the performance of our proposed inhomogeneous mark variogram $\gamma^{\mathrm{inhom}}_{mm}$, we introduce variation in the mark distribution by incorporating uniform noise. Specifically, we consider the mark distribution $m(u)=m(x,y)= a(x,y) \sin(\sqrt{x^2+y^2})$ where $a(x,y) \sim U(0, 0.5)$. This construction is intended to create heterogeneity in the mark distribution without enforcing a strong spatial association between nearby marks. As a result, the marks tend to vary in amplitude while roughly following a shared underlying trend, leading to a setting where mark variation is present, though not necessarily accompanied by mark correlation. Given the considered mark distribution and the spatial distribution of points, on average, a moderate negative mark variation is expected for a moderate range of spatial distances.

\subsubsection{Inhomogeneous Poisson point process}
We consider an inhomogeneous Poisson point process,
with an intensity function given by $\lambda(u)=\lambda(x,y)= 40(x+y+0.5)^4$. In this scenario, we find that the power of the test for $\gamma_{mm}^{\textrm{inhom}}$ is $94\%$, effectively detecting the imposed mark variation despite the strong underlying spatial intensity gradient. In contrast, the homogeneous version $\gamma_{mm}$, which does not account for spatial inhomogeneity, only reaches a power of $17\%$. This considerable gap underscores the importance of incorporating intensity correction when testing for mark variation, as failure to adjust for spatial inhomogeneity can obscure real patterns and lead to underestimation of mark structure. Figure \ref{fig:varioinhomPoiss} showcases the results for one of the 100 simulated point patterns. Looking at the pattern of the points, and comparing it with the previous cases in Section \ref{sec:asso}, one can see that here there exists a considerable local variation among the marks of nearby points, imposing mark variation without strong association. The inhomogeneous mark variogram $\gamma_{mm}^{\textrm{inhom}}$ correctly detects the imposed moderate variability in the marks by falling outside the envelope under random labelling from the lower bound, whereas the homogeneous mark variogram $\gamma_{mm}$ stays inside the envelope for the whole considered range of spatial distances, failing to identify any significant mark variation. With respect to the type I error probabilities,  $\gamma_{mm}^{\textrm{inhom}}$ and $\gamma_{mm}$ mistakenly detect some significant mark variation for $7\%$ and $4\%$ of the patterns, respectively.

\begin{figure}[!h]
    \centering
    \includegraphics[scale=0.19]{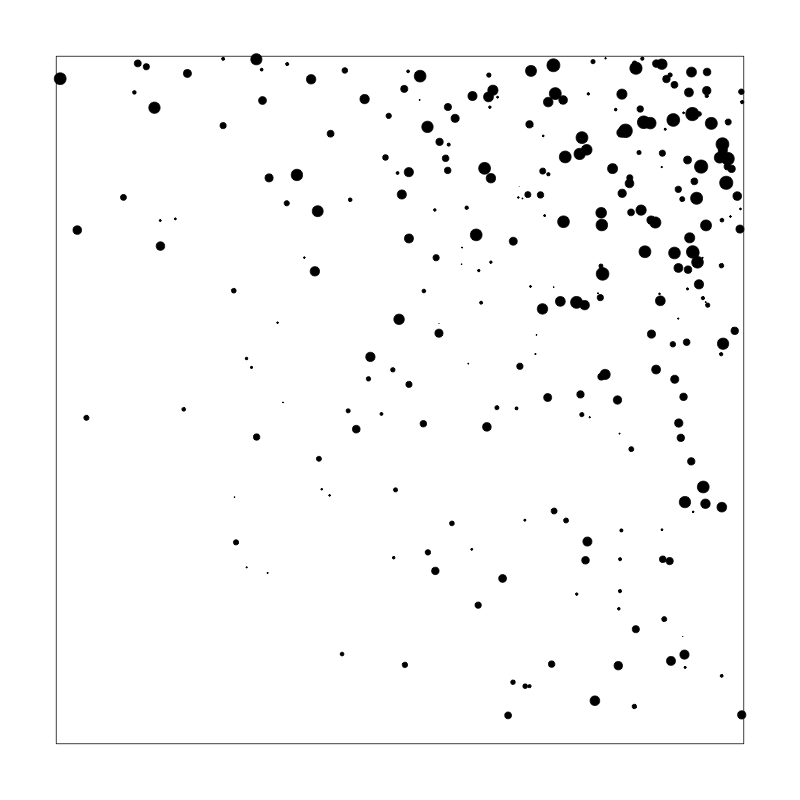}
    \includegraphics[scale=0.18]{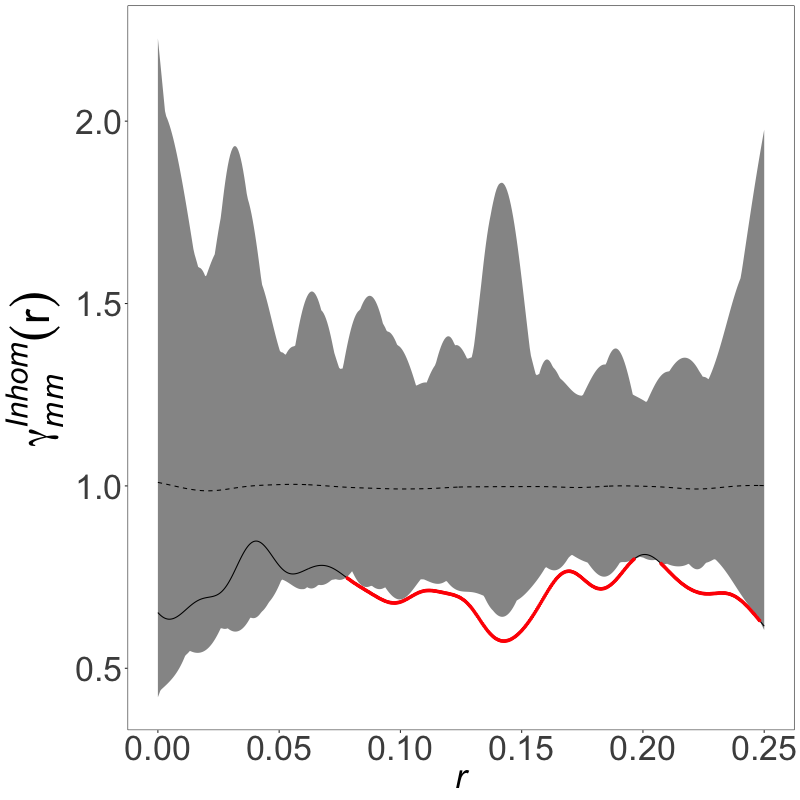}
    \includegraphics[scale=0.18]{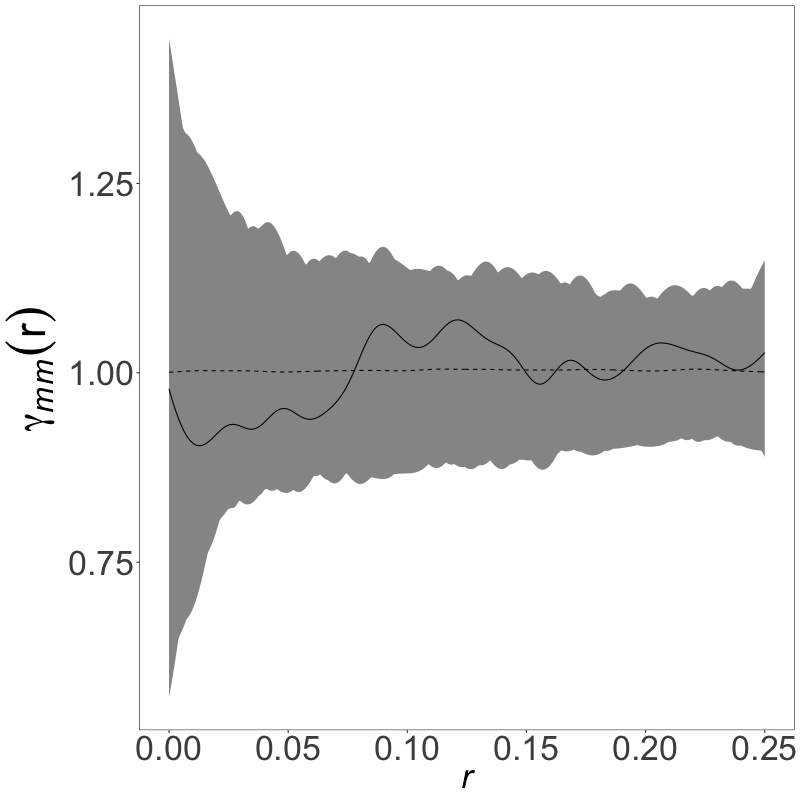}
    \caption{
    One of the 100 simulated point patterns, together with its corresponding global envelope tests based on $\gamma_{mm}^{\textrm{inhom}}$ and $\gamma_{mm}$.
    }
    \label{fig:varioinhomPoiss}
\end{figure}

\subsubsection{Log-Gaussian Cox point processes}

Now, we consider a Log-Gaussian Cox point process where the driving Gaussian random field has a mean function $(x,y) \mapsto \log(200 (x +y + 0.1))$, and a Gaussian covariance function $((x_1,y_1),(x_2,y_2)) \mapsto \exp (- 100 (|| (x_1,y_1) - (x_2,y_2)||)^2 )$, meaning that the intensity function is given as $\lambda(u)=\lambda(x,y)= 200 (x +y + 0.1)\exp (0.5)$. This construction generates spatial point patterns with both strong inhomogeneity and spatial clustering due to the stochastic variation of the intensity surface, combined with a moderate mark variation imposed. Here, we find that  $\gamma_{mm}^{\textrm{inhom}}$ demonstrates a high sensitivity to mark variation, correctly identifying it in  $98\%$ of the simulated patterns, whereas $\gamma_{mm}$ detects the imposed mark variation in only $56\%$ of the cases. 
Figure \ref{fig:variolgcp} showcases the results for one of the patterns in this scenario, where both the spatial heterogeneity and mark variation are evident. One can see that $\gamma_{mm}^{\textrm{inhom}}$ for the entire range of spatial distances correctly identifies the imposed mark variation, while $\gamma_{mm}$ falls within the envelope under random labelling for the entire range of spatial distance, being unable to detect any significant mark variation. In addition, it is noticed that $\gamma_{mm}^{\textrm{inhom}}$ and $\gamma_{mm}$ do not necessarily have similar fluctuations across spatial distances; this is due to the fact that $\gamma_{mm}$ does not account for spatial distribution. Finally, turning to the type I error probability, we observe that both $\gamma_{mm}^{\textrm{inhom}}$ and $\gamma_{mm}$ exhibit a Type I error probability of $0.07$.

\begin{figure}[!h]
    \centering
    \includegraphics[scale=0.19]{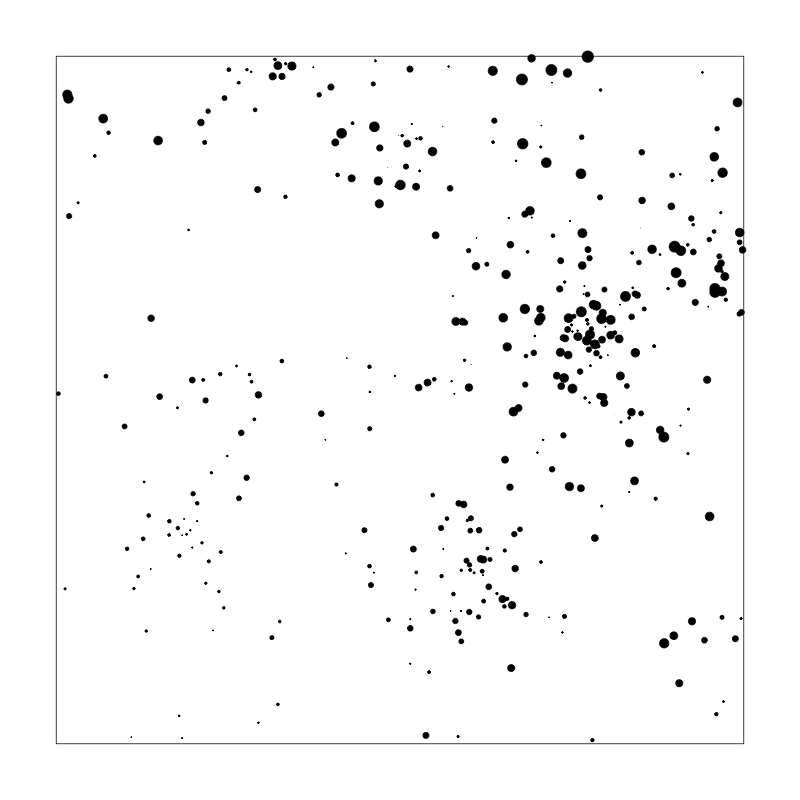}
    \includegraphics[scale=0.18]{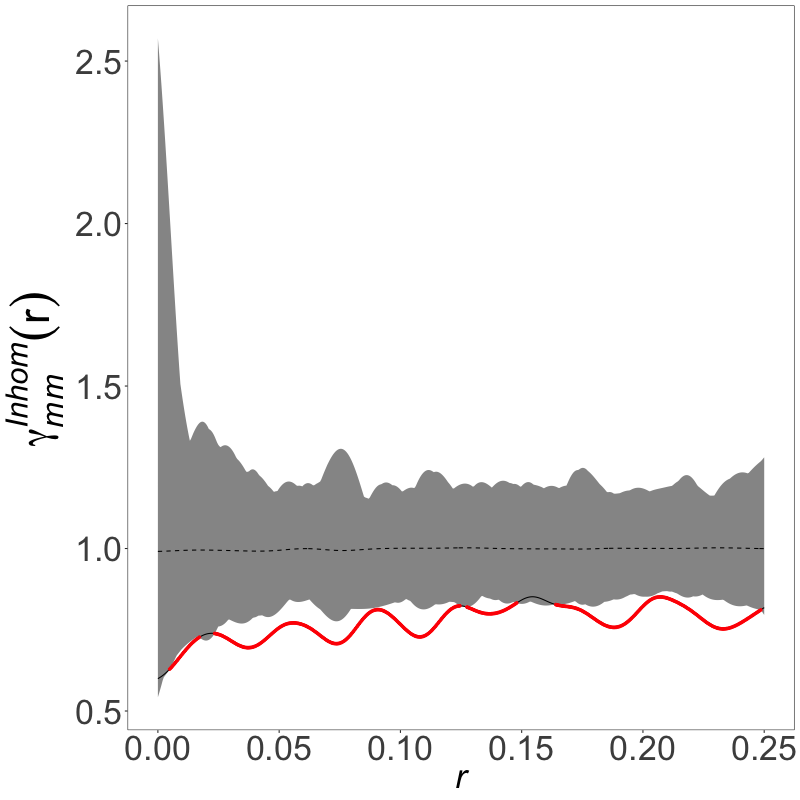}
    \includegraphics[scale=0.18]{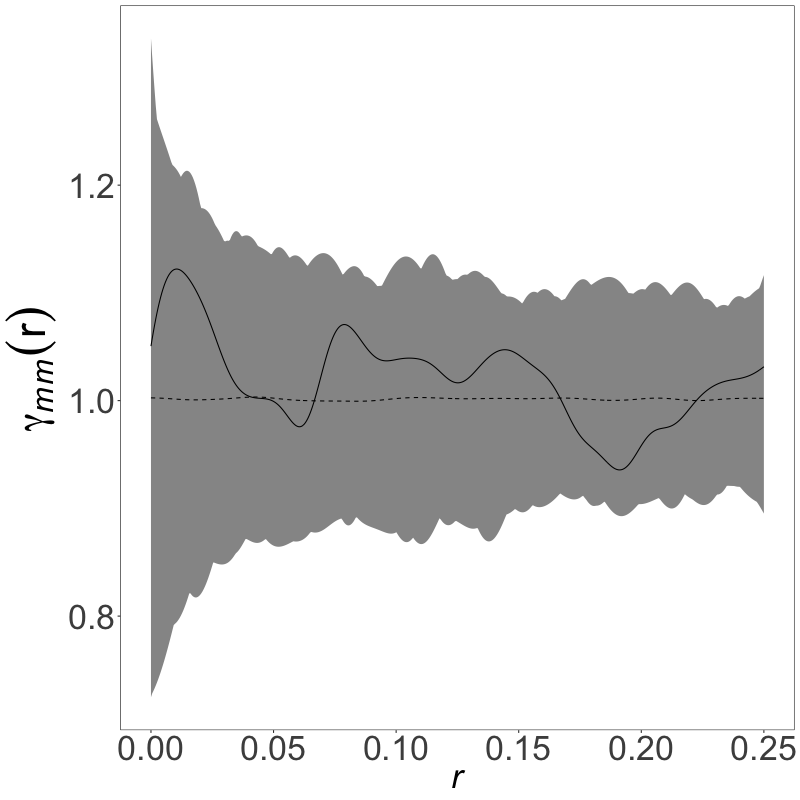}
    \caption{
    One of the 100 simulated point patterns, together with its corresponding global envelope tests based on $\gamma_{mm}^{\textrm{inhom}}$ and $\gamma_{mm}$.
    }
    \label{fig:variolgcp}
\end{figure}

\section{Applications}\label{sec:apps}

This section is devoted to demonstrating the performance of our proposed inhomogeneous mark correlation functions in two real-world scenarios within forestry, where locations of trees are labelled by real-valued marks. In both cases, the observed point patterns show heterogeneity in the distribution of points as well as marks, and we employ $\kappa^{\mathrm{inhom}}_{mm}$ and $\gamma^{\mathrm{inhom}}_{mm}$, in combination with completely non-parametric rank envelope test based on extreme rank lengths with 1000 permutations \citep{myllymaki2017global}, to study potential mark association/variation among trees. The spatial intensities of the trees are estimated using the kernel-based intensity estimator \eqref{e:kde.2D.JD} in combination with Cronie and van Lieshout's criterion for choosing the smoothing bandwidth parameter \citep{cronie2018non}.

\subsection{Longleaf pine trees}

Longleaf data, available through the \textsf{R} package \textsf{spatstat.data}, contain the locations and diameters at breast height for 584 Longleaf pine trees (Pinus palustris) in a $200 \times 200$ meter region in southern Georgia (USA); data were collected by \cite{platt1988population}. The diameters range from $2$ to $75.90$ centimetres, with an average of $26.84$ and a variance of $336.03$. 
Figure \ref{fig:longleafdata} displays the locations of Longleaf pine trees, with their diameters at breast height (dbh) represented by the size of the points rather than numeric labels for clarity. The majority of the trees are adult, having $\text{dbh}>30$, whereas some younger trees with $\text{dbh}<30$ are located with a tendency towards the north-east of the forest; these differences in the spatial distribution of dbh highlight the inhomogeneity of marks. In addition, in Figure \ref{fig:longleafdata}, we show the estimated spatial intensity of the trees, in which a clear inhomogeneity in the spatial distribution of the trees is evident. We additionally show the spatially varying average and variance of tree dbh, using the Nadaraya-Watson smoother \citep[Chapter 6]{Baddeley2015}, both of which highlight the existence of inhomogeneity in the distribution of the marks. These existing inhomogeneity in the distribution of trees, as well as their dbh, highlights the need for mark correlation functions that account for such heterogeneity.

\begin{figure}
    \centering
    \includegraphics[scale=0.2]{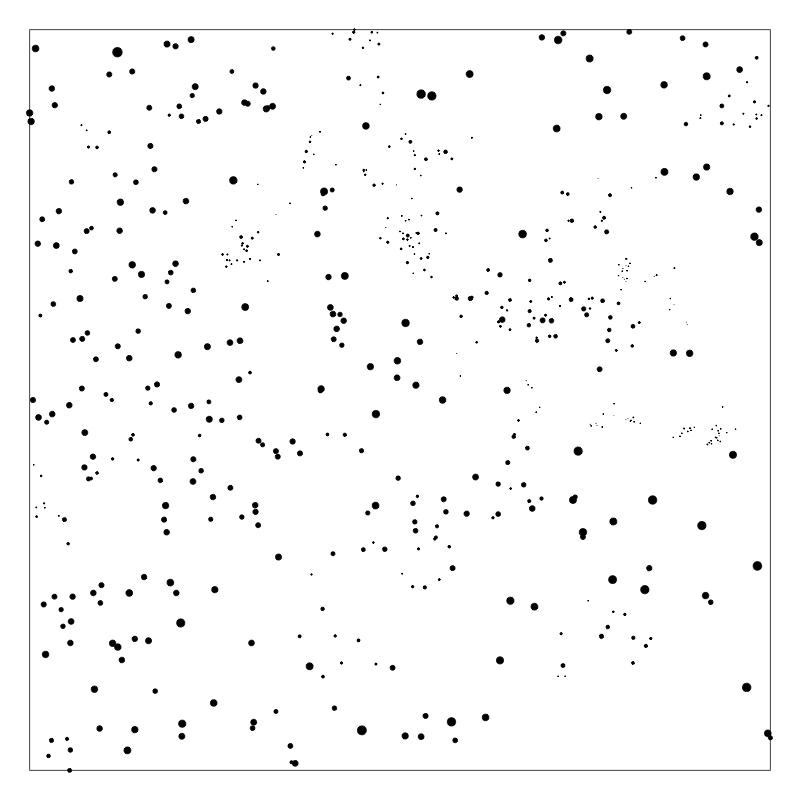}
    \quad
    \includegraphics[scale=0.2]{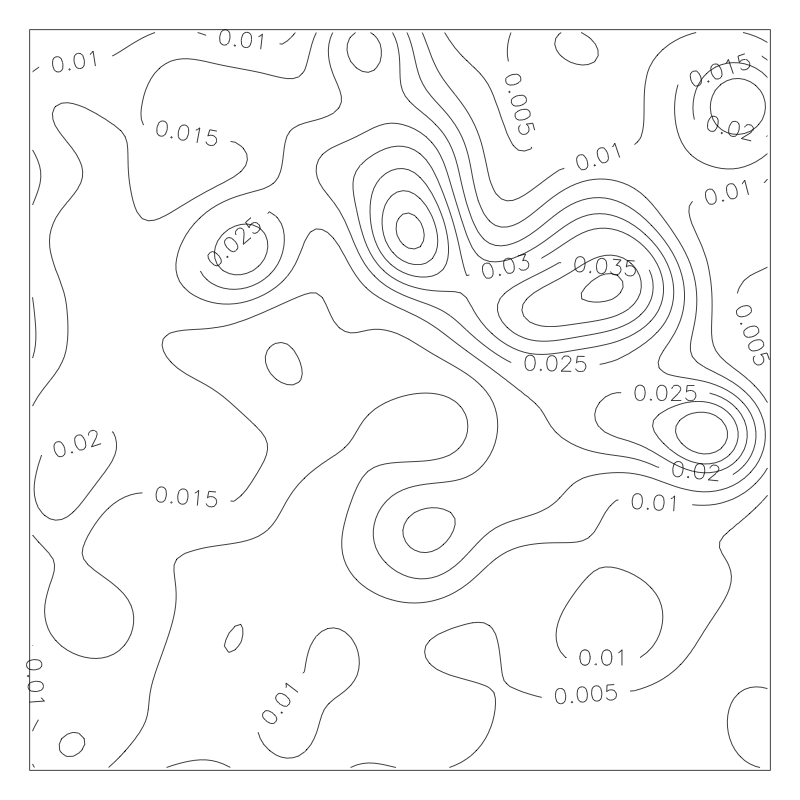}
    \includegraphics[scale=0.2]{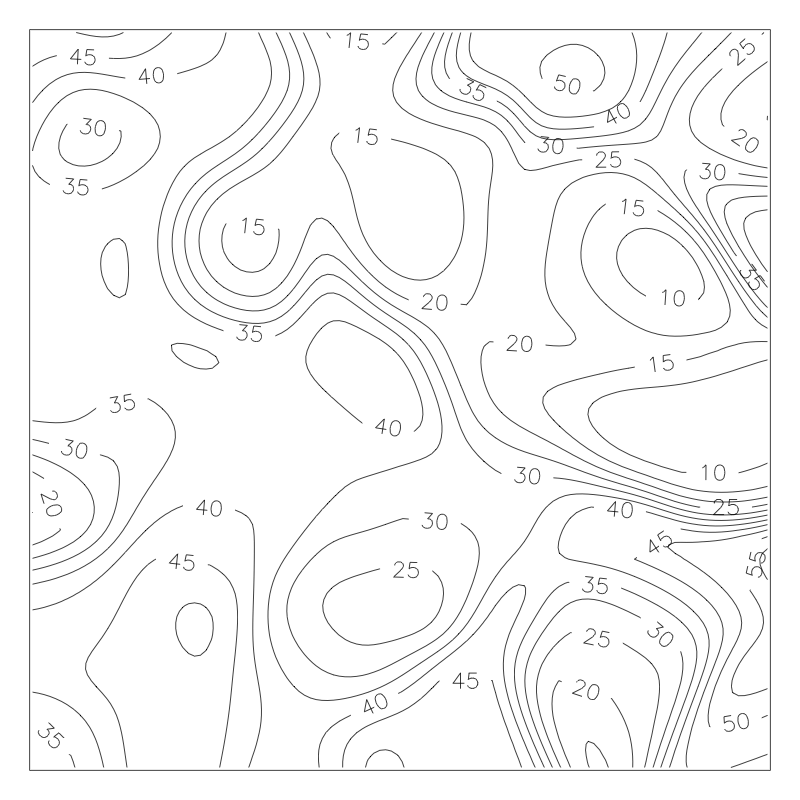}
    \quad
    \includegraphics[scale=0.2]{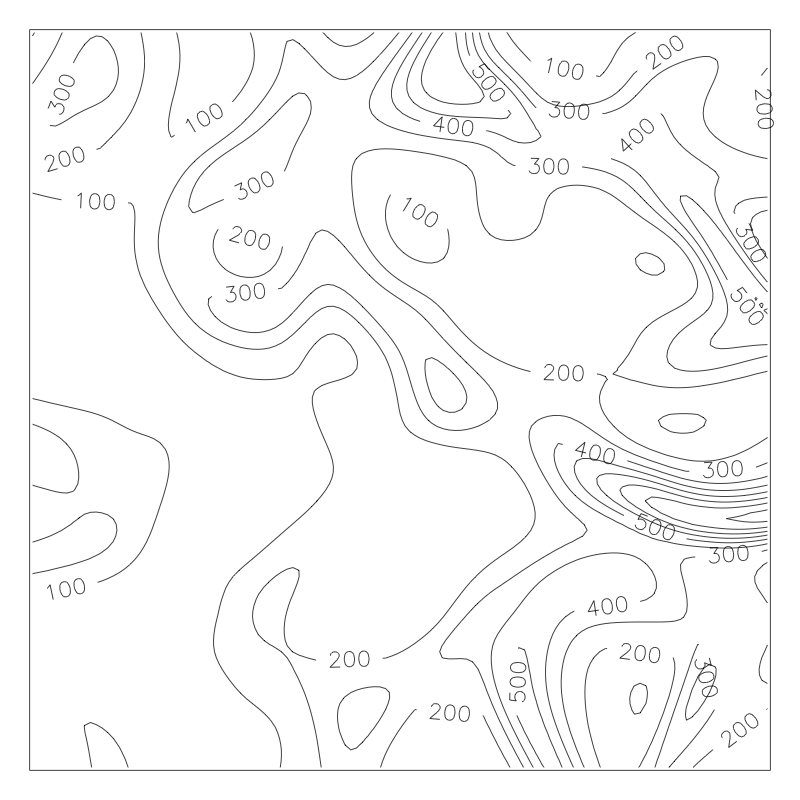}
    \caption{
    From left to right: location and diameter at breast height of Longleaf pine trees (the larger the point, the larger the diameter), estimated spatial intensity, spatially varying average tree diameter in centimetres, and spatially varying variance of tree diameter in centimetres squared.
    }
    \label{fig:longleafdata}
\end{figure}

To address this identified need, we proceed by calculating the inhomogeneous mark correlation functions $\kappa^{\mathrm{inhom}}_{mm}$ and $\gamma^{\mathrm{inhom}}_{mm}$ to detect potential associations and variations among marks while accounting for the spatial inhomogeneity of the point pattern. For comparison, we also compute the homogeneous versions of these mark correlation functions, denoted as $\kappa_{mm}$ and $\gamma_{mm}$. The results, along with global envelope tests based on 1000 permutations, are presented in Figure \ref{fig:longleafresults}. The homogenous mark variogram $\gamma_{mm}$ stays below the obtained envelope under random labelling for almost the entire considered spatial range of distance, identifying significantly lower variation among trees' dbh than the expected variation under random labelling. Additionally, it highlights that such a significant difference is quite stronger for smaller spatial distances, as it has values close to zero, indicating similar dbh for very close trees. The inhomogeneous mark variogram $\gamma^{\mathrm{inhom}}_{mm}$, however, offers slightly different conclusions. It stays below the obtained envelope only for pairs of trees whose spatial distance is less than approximately 25 meters. Therefore, no significant variation in the spatial distribution of trees' dbh is detected beyond the pairwise spatial distance of 25 meters. In other terms, $\gamma^{\mathrm{inhom}}_{mm}$ suggests that in high-intensity areas of the forest, the dbh variation is slightly higher than that detected by $\gamma_{mm}$; this variation might reflect competition among trees for resources such as light and nutrients which lead to individual trees grow at different rates depending on their micro-environment. In contrast, in areas with lower intensity, dbh variation is not identified as significant, which might be explained by the fact that in such areas the competition is supposed to be weaker among trees, and thus they may grow more uniformly.
Turning to the homogeneous and inhomogeneous mark correlation functions $\kappa_{mm}$ and $\kappa^{\mathrm{inhom}}_{mm}$ to study potential mark association among trees, interestingly, we observe that they show completely different outcomes. From the outcomes based on $\kappa_{mm}$, a strong negative association is detected for pairs of trees with a spatial distance less than approximately 15 meters; for larger pairwise distances, it almost stays within the envelope, indicating no significant association. In contrast, the inhomogeneous mark correlation function $\kappa^{\mathrm{inhom}}_{mm}$ shows strong positive association among trees for the entire range of pairwise spatial distances, as it stays above the envelope. This suggests that, when considering any pair of trees and adjusting for spatial intensity, there is a tendency for at least one tree to have a large dbh. The findings based on $\gamma^{\mathrm{inhom}}_{mm}$ and $\kappa^{\mathrm{inhom}}_{mm}$ align more closely with the spatially varying mean and variance of trees' dbh shown in Figure \ref{fig:longleafdata}.

\begin{figure}
    \centering
    \includegraphics[scale=0.2]{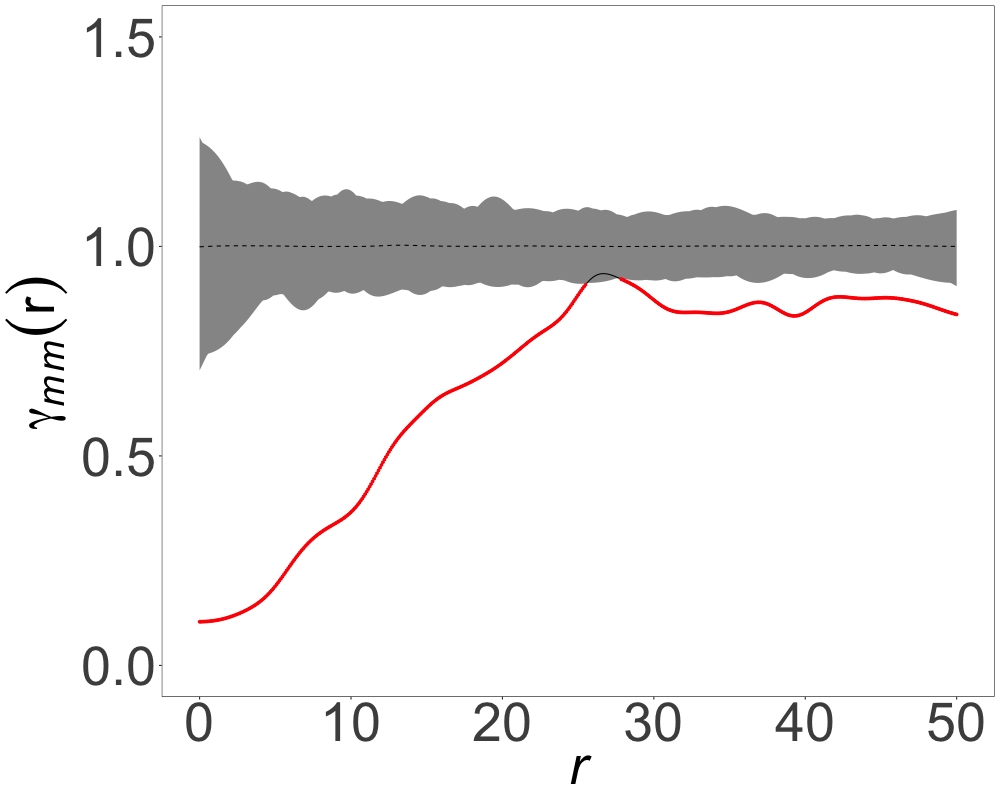}
    \includegraphics[scale=0.2]{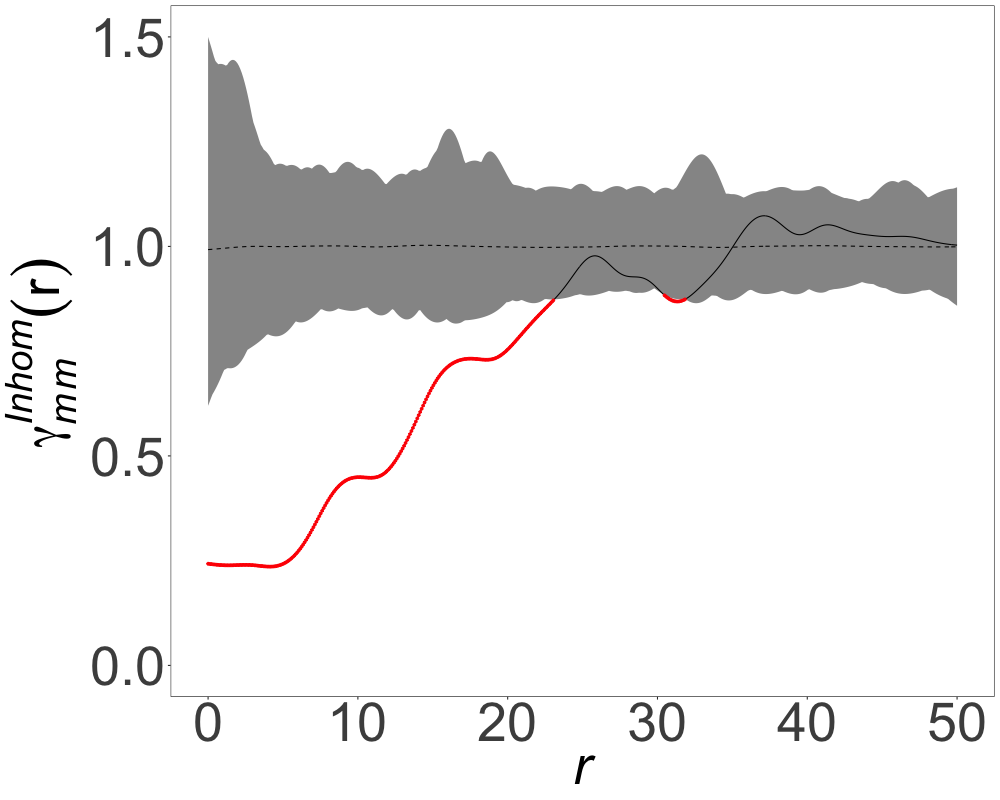}
    \includegraphics[scale=0.2]{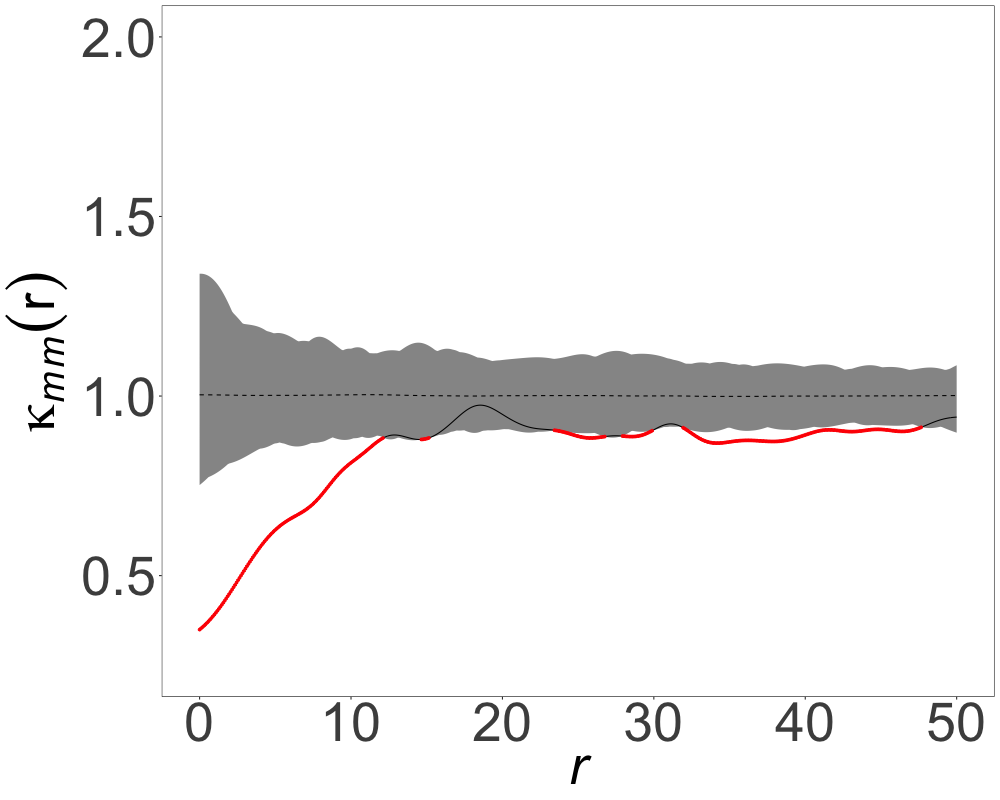}
    \includegraphics[scale=0.2]{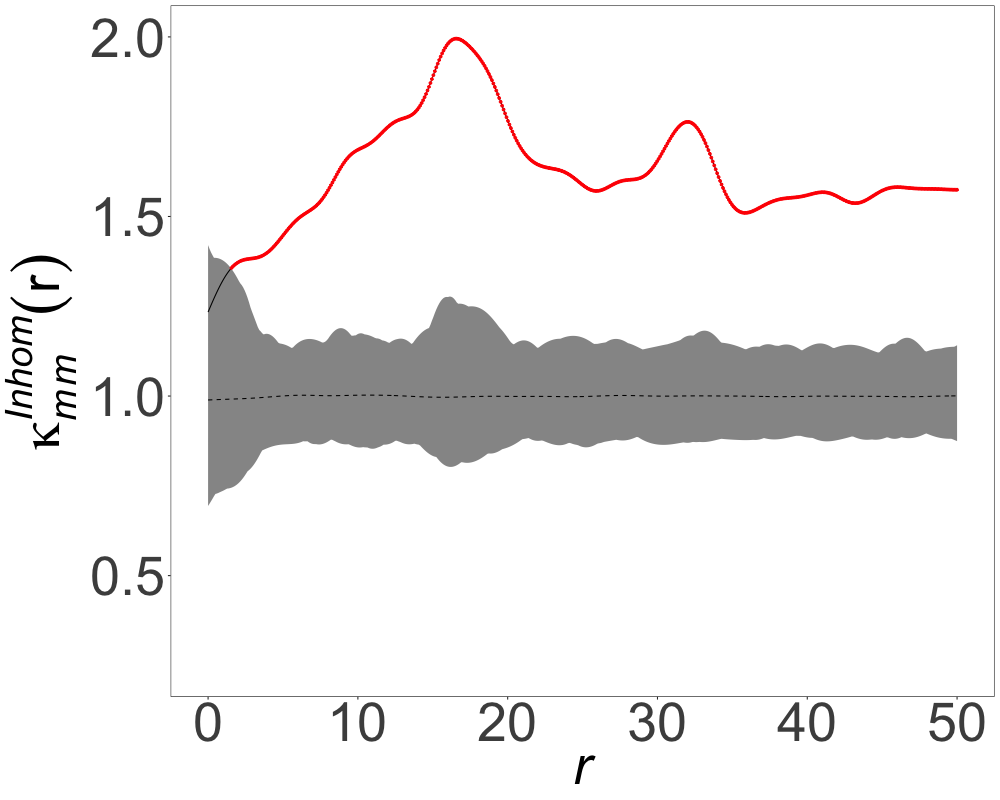}
    \caption{
    From top-left to bottom-right: homogeneous mark variogram $\gamma_{mm}$, inhomogeneous mark variogram $\gamma_{mm}^{\mathrm{inhom}}$, homogeneous mark correlation function $\kappa_{mm}$, inhomogeneous mark correlation function $\kappa_{mm}^{\mathrm{inhom}}$.
    }
    \label{fig:longleafresults}
\end{figure}

\subsection{Pfynwald}

In this section, we study the location and height of trees from the \textit{Pfynwald} project, available via an Open Database License\footnote{\url{https://opendata.swiss}}, which originates from a long-term irrigation experiment begun in 2003 within Switzerland's \textit{Pfyn-Finges} national park \citep{pfynwald:2016}. 
Here, we are interested in investigating the spatial distribution of trees in relation to their height, aiming to identify potential patterns of variation and/or association. As the height measurements are only available for a subset of trees, our analysis is limited to the measurements collected in 2009. The dataset contains the locations and height of $289$ trees shown in the right-hand side of Figure \ref{fig:Pfynwalddata}, where one can see a higher concentration of trees in the south of the forest. The height of trees varies between $1.1$ to $14.1$ meters, with an average of $9.516$ and a variance of $20.03$; $75\%$ of trees are taller than $11$ meters. Looking at Figure \ref{fig:Pfynwalddata}, we can see that most short/young trees are located in the centre and south of the forest, while taller trees are spread around the forest. This combination of short and tall trees highlights a variation in the spatial distribution of the trees' height. Similar to the case of Longleaf pines, here we first estimate the spatial intensity of the trees as well as the spatially varying average and variance of the trees' height, using the Nadaraya-Watson smoother. From Figure \ref{fig:Pfynwalddata}, the estimated spatial intensity of trees is highest in the southern part of the forest, decreases toward the central region, and shows a slight increase again in the north, highlighting the inhomogeneity of the pattern. Looking at the varying average and variance of the trees' height, we can further observe that the average height is highest in the north, lowers down towards the centre, and slightly increases in the south. In contrast, the estimated spatially varying variance of the trees' height is highest in the central region, decreasing toward both the north and south, with a slight increase in the far south-west and south-east.

\begin{figure}
    \centering
    \includegraphics[scale=0.27]{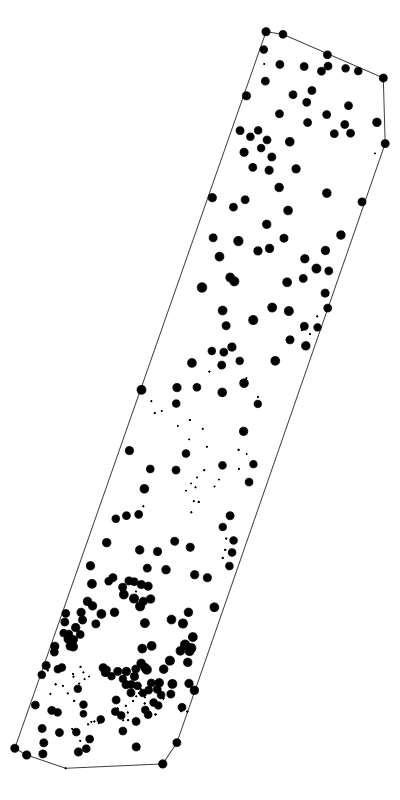}
    \quad
    \includegraphics[scale=0.27]{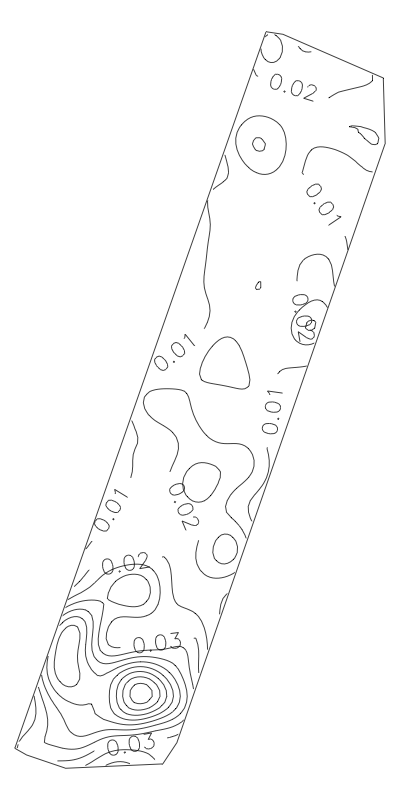}
    \includegraphics[scale=0.27]{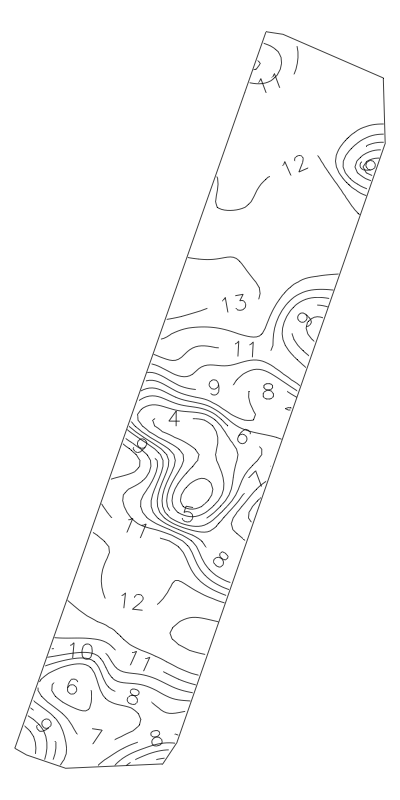}
    \quad
    \includegraphics[scale=0.27]{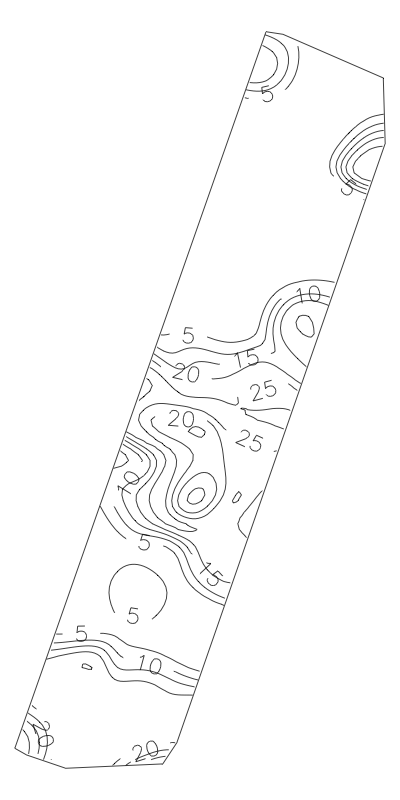}
    \caption{
    From left to right: location and height of trees (the larger the point, the taller the tree), estimated spatial intensity, spatially varying average tree height in centimetres, and spatially varying variance of tree height in centimetres squared.
    }
    \label{fig:Pfynwalddata}
\end{figure}

Next, we make use of our proposed inhomogeneous mark correlation functions $\kappa_{mm}^{\mathrm{inhom}}$ and $\gamma_{mm}^{\mathrm{inhom}}$ jointly with their homogeneous versions, denoted as $\kappa_{mm}$ and $\gamma_{mm}$. Results are presented in Figure \ref{fig:Pfynresults}. Interestingly, we find that the homogeneous mark correlation functions fail to detect any significant association/variation as they both stay within their corresponding envelopes under random labelling for the entire range of spatial distances. For very small spatial distances, both functions remain near the lower bounds of their respective envelopes, indicating a tendency toward negative mark association and variation among nearby trees, although this does not reach statistical significance. Turning to our proposed $\kappa_{mm}^{\mathrm{inhom}}$ and $\gamma_{mm}^{\mathrm{inhom}}$, apart from very small distances, we find that $\kappa_{mm}^{\mathrm{inhom}}$ consistently lies above the envelope generated under random labelling, indicating positive mark association among the height of the trees. This suggests that for any two trees separated by at least 7 meters, it is likely that at least one of them is tall. The fact that $\kappa_{mm}^{\mathrm{inhom}}$ stays within the envelope for spatial distances less than 7 meters means that in high-intensity areas within the Pfyn forest, the height of the trees seems to behave randomly compared to the same for the trees separated by larger distances. This, in turn, suggests that in high-intensity areas, trees compete for resources, and thus, leading to heterogeneous growth conditions and consequently greater variability in the height of trees among closely spaced individuals. Moreover, $\gamma_{mm}^{\mathrm{inhom}}$ falls below its corresponding envelope for pairs of trees separated by distances between 7 and 28 meters approximately, implying that the variation in the trees' height within such pairs is lower than would be expected under random labelling; under random labelling we expect the variation among the trees' height to be at least slightly higher. In other words, pairs of trees with a spatial distance between 7 and 28 meters exhibit a degree of similarity in height. This is in agreement with our observations based on $\kappa_{mm}^{\mathrm{inhom}}$.


\begin{figure}
    \centering
    \includegraphics[scale=0.2]{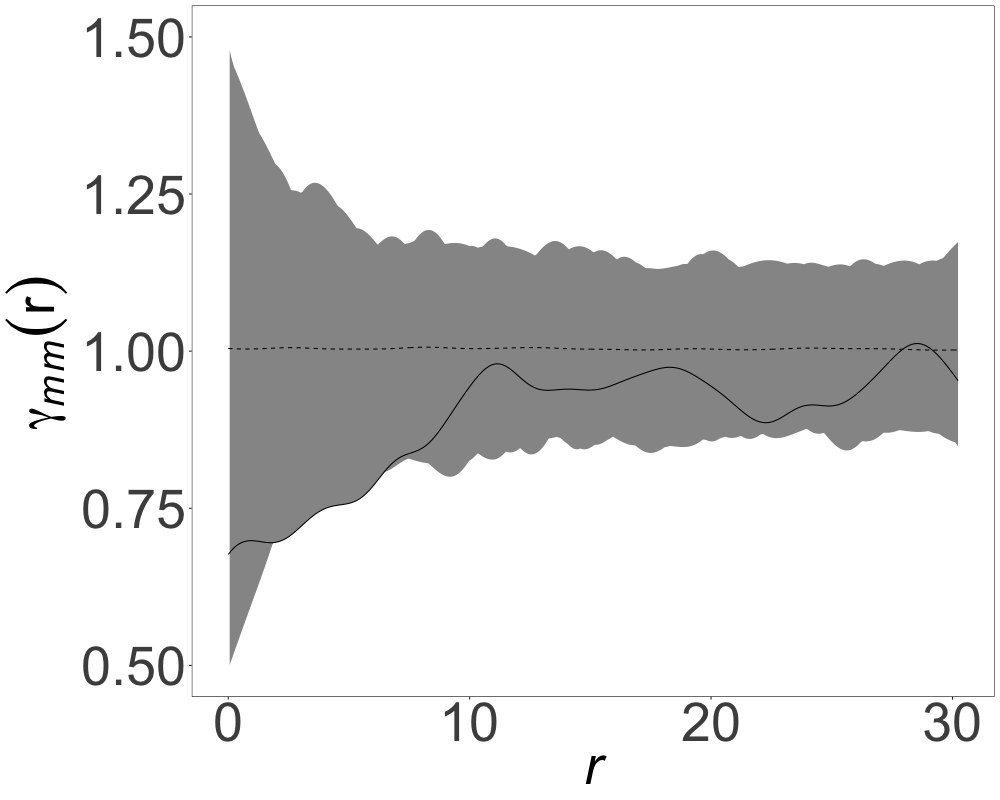}
    \includegraphics[scale=0.2]{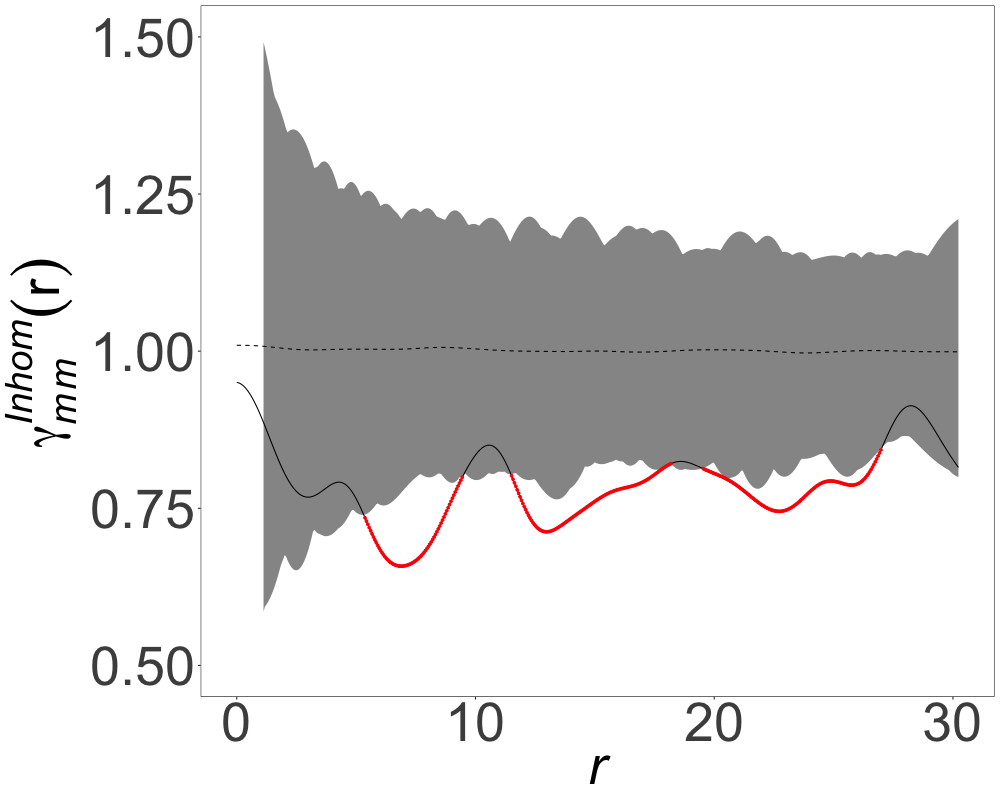}
    \includegraphics[scale=0.2]{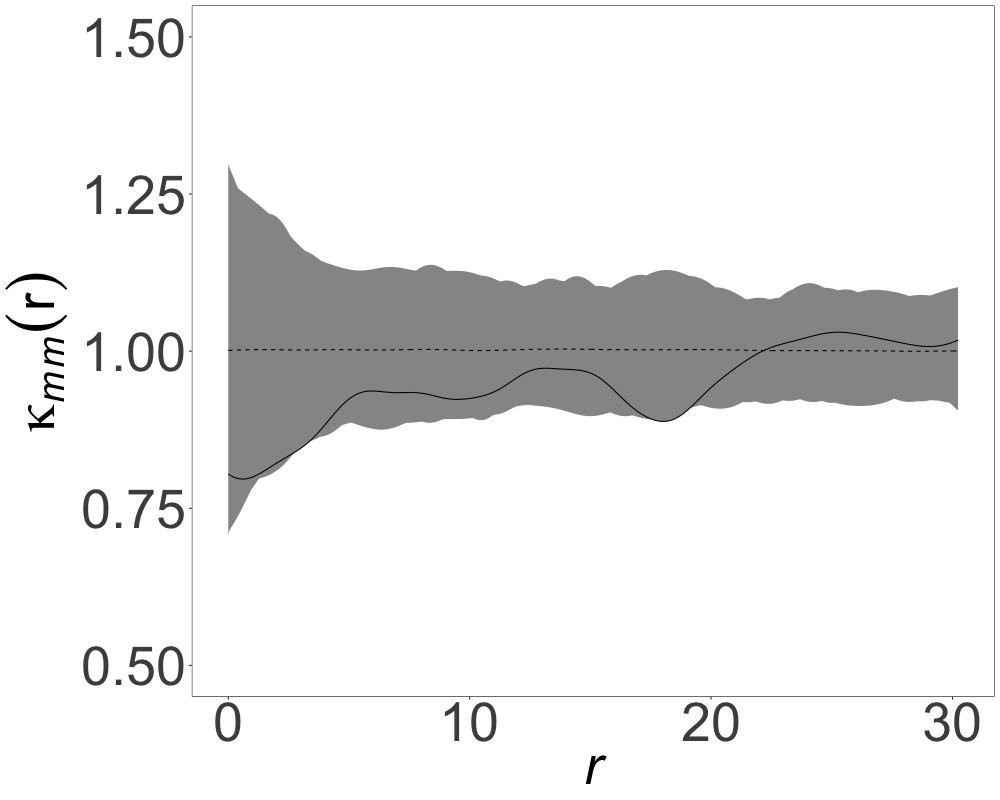}
    \includegraphics[scale=0.2]{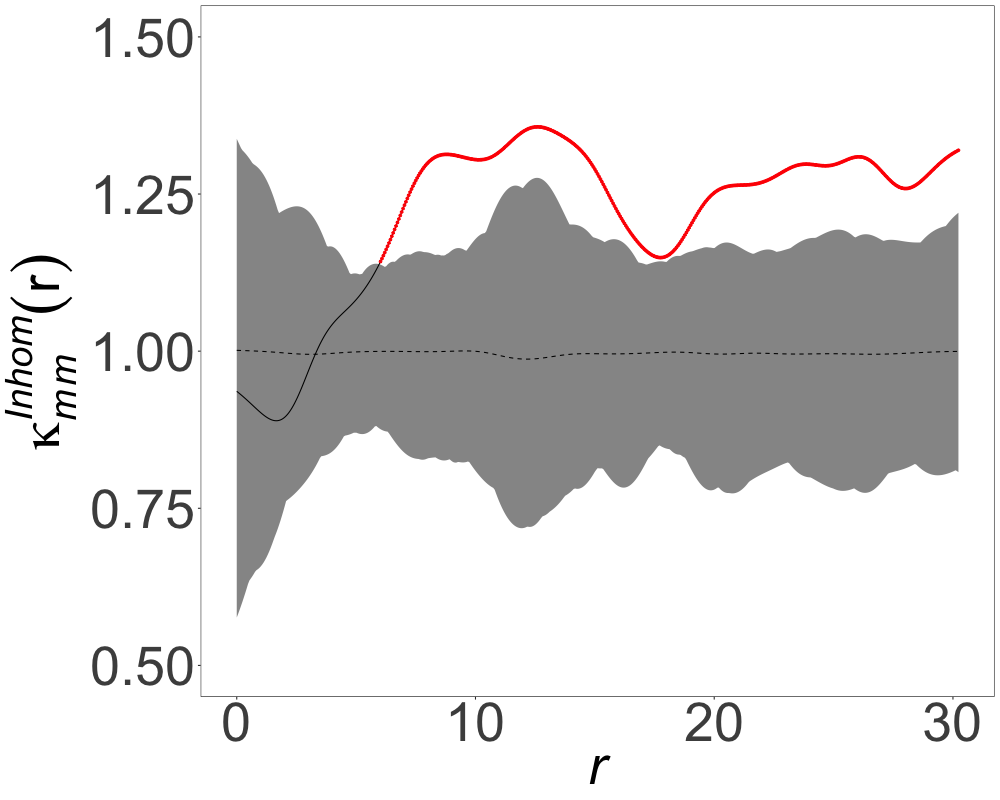}
    \caption{
    From top-left to bottom-right: homogeneous mark variogram $\gamma_{mm}$, inhomogeneous mark variogram $\gamma_{mm}^{\mathrm{inhom}}$, homogeneous mark correlation function $\kappa_{mm}$, inhomogeneous mark correlation function $\kappa_{mm}^{\mathrm{inhom}}$.
    }
    \label{fig:Pfynresults}
\end{figure}

\section{Discussion}\label{sec:diss}

In numerous applications of marked point processes, where heterogeneity in the distribution of points is evident, traditional mark correlation functions, relying on stationarity assumptions, fail to accurately capture the actual underlying distributional behaviour of the marks. In particular, they might not be efficient at detecting spatially dependent associations or variations among the marks, as they assume that all points have the same number of neighbours at a given interpoint distance. This key limitation arises from their disregard for spatial variations in the intensity function.
Motivated by two applications in forestry, where the spatial distributions of trees are non-uniform, we have introduced the class of inhomogeneous mark correlation functions for general marked point processes where marks are real-valued. We have shown that unnormalised mark correlation functions can be expressed based on the ratio of two pair correlation functions.
This paved the way for proposing inhomogeneous mark correlation functions, which turned out to be quite effective in situations where the spatial distribution of points exhibits clear heterogeneity. Supported by the statistical properties of the estimators for the inhomogeneous pair correlation function, we have proposed nonparametric estimators for our proposed inhomogeneous mark correlation functions that remain ratio-unbiased. In particular, we found that the homogeneous Stoyan’s mark correlation function yields highly misleading results when applied to inhomogeneous point patterns. These inaccuracies go beyond merely detecting the presence or absence of mark association, as they can also lead to incorrect conclusions about the positivity/negativity of the actual association. In contrast, we found that the homogeneous mark variogram is less sensitive to heterogeneity in the point distribution. Its limitations primarily lie in detecting potential mark variations; when such variations are detected, the positivity/negativity of mark variations is generally correct, although the strength of the variation is often overestimated. In other words, our inhomogeneous mark variogram outperforms its homogeneous counterpart in accurately capturing the strength of mark variation.

The inhomogeneous mark correlation functions, applicable to marked point processes on general state spaces, extend to various mark settings such as composition/function/graph-valued cases as specific types of object-valued marks, in addition to real-valued marks, without being confined to planar point patterns. Notably, for analysing marked point patterns on linear networks where points are often unevenly distributed across the network, this framework offers a valuable alternative to current methodologies that do not account for heterogeneity in the distribution of points. In addition, based on the same formulation, one can define inhomogeneous local indicators of mark associations/variations, and further extend the framework to cross-characteristics and inhomogeneous multivariate point processes with additional quantitative marks.

\vspace{-0.7cm}

\bibliographystyle{biom}
\bibliography{inhom}

\begin{thebibliography}{}

\bibitem[\protect\citeauthoryear{Ang, Baddeley, and Nair}{Ang
  et~al.}{2012}]{Ang:Baddeley:Nair2012}
Ang, Q.~W., Baddeley, A., and Nair, G. (2012).
\newblock Geometrically corrected second order analysis of events on a linear
  network, with applications to ecology and criminology.
\newblock {\em Scandinavian Journal of Statistics} {\bf 39,} 591--617.

\bibitem[\protect\citeauthoryear{Baddeley}{Baddeley}{2010}]{Baddeley2010}
Baddeley, A. (2010).
\newblock {\em Handbook of Spatial Statistics}, chapter Multivariate and Marked
  Point Processes, pages 371--402.
\newblock Chapman \& Hall/CRC Handbooks of Modern Statistical Methods. CRC
  Press.

\bibitem[\protect\citeauthoryear{Baddeley, M{\o}ller, and
  Waagepetersen}{Baddeley et~al.}{2000}]{InhomK2000}
Baddeley, A., M{\o}ller, J., and Waagepetersen, R. (2000).
\newblock Non-and semi-parametric estimation of interaction in inhomogeneous
  point patterns.
\newblock {\em Statistica Neerlandica} {\bf 54,} 329--350.

\bibitem[\protect\citeauthoryear{Baddeley, Rubak, and Turner}{Baddeley
  et~al.}{2015}]{Baddeley2015}
Baddeley, A., Rubak, E., and Turner, R. (2015).
\newblock {\em Spatial Point Patterns: Methodology and Applications with R}.
\newblock CRC Press.

\bibitem[\protect\citeauthoryear{Beisbart and Kerscher}{Beisbart and
  Kerscher}{2000}]{Beisbart:2000}
Beisbart, C. and Kerscher, M. (2000).
\newblock Luminosity- and morphology-dependent clustering of galaxies.
\newblock {\em The Astrophysical Journal} {\bf 545,} 6--25.

\bibitem[\protect\citeauthoryear{Cressie}{Cressie}{1993}]{cressie93}
Cressie, N. (1993).
\newblock {\em Statistics for Spatial Data}.
\newblock Wiley.

\bibitem[\protect\citeauthoryear{Cronie, Jansson, and Konstantinou}{Cronie
  et~al.}{2024}]{cronie2024discussion}
Cronie, O., Jansson, J., and Konstantinou, K. (2024).
\newblock Discussion of the paper “marked spatial point processes: Current
  state and extensions to point processes on linear networks”.
\newblock {\em Journal of Agricultural, Biological and Environmental
  Statistics} {\bf 29,} 379--388.

\bibitem[\protect\citeauthoryear{Cronie, Moradi, and Mateu}{Cronie
  et~al.}{2020}]{cronie2020inhomogeneous}
Cronie, O., Moradi, M., and Mateu, J. (2020).
\newblock Inhomogeneous higher-order summary statistics for point processes on
  linear networks.
\newblock {\em Statistics and Computing} {\bf 30,} 1221--1239.

\bibitem[\protect\citeauthoryear{Cronie and van Lieshout}{Cronie and van
  Lieshout}{2018}]{cronie2018non}
Cronie, O. and van Lieshout, M. (2018).
\newblock A non-model-based approach to bandwidth selection for kernel
  estimators of spatial intensity functions.
\newblock {\em Biometrika} {\bf 105,} 455--462.

\bibitem[\protect\citeauthoryear{Daley and Vere-Jones}{Daley and
  Vere-Jones}{2003}]{Daley2003}
Daley, D. and Vere-Jones, D. (2003).
\newblock {\em An Introduction to the theory of point processes. Volume I}.
\newblock Springer, Berlin-Heidelberg.

\bibitem[\protect\citeauthoryear{Eckardt and Moradi}{Eckardt and
  Moradi}{2024a}]{Eckardt:Moradi:currrent}
Eckardt, M. and Moradi, M. (2024a).
\newblock Marked spatial point processes: Current state and extensions to point
  processes on linear networks.
\newblock {\em Journal of Agricultural, Biological and Environmental
  Statistics} {\bf 29,} 346--378.

\bibitem[\protect\citeauthoryear{Eckardt and Moradi}{Eckardt and
  Moradi}{2024b}]{Eckardt2024Rejoinder}
Eckardt, M. and Moradi, M. (2024b).
\newblock Rejoinder on `marked spatial point processes: Current state and
  extensions to point processes on linear networks'.
\newblock {\em Journal of Agricultural, Biological and Environmental
  Statistics} {\bf 29,} 405--416.

\bibitem[\protect\citeauthoryear{Eckardt and Moradi}{Eckardt and
  Moradi}{2025}]{LIMA}
Eckardt, M. and Moradi, M. (2025).
\newblock Local indicators of mark association for spatial marked point
  processes.

\bibitem[\protect\citeauthoryear{Illian, Penttinen, Stoyan, and Stoyan}{Illian
  et~al.}{2008}]{Illian2008}
Illian, J., Penttinen, A., Stoyan, H., and Stoyan, D. (2008).
\newblock {\em Statistical Analysis and Modelling of Spatial Point Patterns}.
\newblock John Wiley \& Sons, New York.

\bibitem[\protect\citeauthoryear{Isham}{Isham}{1985}]{isham1985marked}
Isham, V. (1985).
\newblock Marked point processes and their correlations.
\newblock In {\em Spatial processes and spatial time series analysis}, pages
  63--75. Publications des Facultes Universitaires Saint-Louis Brussels.

\bibitem[\protect\citeauthoryear{Mateu and Moradi}{Mateu and
  Moradi}{2024}]{mateu2024non}
Mateu, J. and Moradi, M. (2024).
\newblock Non-parametric intensity estimation for spatial point patterns with
  r.
\newblock In {\em Flexible nonparametric curve estimation}, pages 113--151.
  Springer.

\bibitem[\protect\citeauthoryear{Moradi}{Moradi}{2018}]{moradi2018spatial}
Moradi, M. (2018).
\newblock {\em Spatial and {S}patio-{T}emporal {P}oint {P}atterns on {L}inear
  {N}etworks}.
\newblock {PhD} dissertation, University Jaume I.

\bibitem[\protect\citeauthoryear{Moradi, Cronie, Rubak, Lachieze-Rey, Mateu,
  and Baddeley}{Moradi et~al.}{2019}]{MoradiVor2019}
Moradi, M., Cronie, O., Rubak, E., Lachieze-Rey, R., Mateu, J., and Baddeley,
  A. (2019).
\newblock Resample-smoothing of {V}oronoi intensity estimators.
\newblock {\em Statistics and Computing} {\bf 29,} 995–1010.

\bibitem[\protect\citeauthoryear{Myllym{\"a}ki, Mrkvi{\v{c}}ka, Grabarnik,
  Seijo, and Hahn}{Myllym{\"a}ki et~al.}{2017}]{myllymaki2017global}
Myllym{\"a}ki, M., Mrkvi{\v{c}}ka, T., Grabarnik, P., Seijo, H., and Hahn, U.
  (2017).
\newblock Global envelope tests for spatial processes.
\newblock {\em Journal of the Royal Statistical Society Series B: Statistical
  Methodology} {\bf 79,} 381--404.

\bibitem[\protect\citeauthoryear{Penttinen, Stoyan, and Henttonen}{Penttinen
  et~al.}{1992}]{penttinen1992marked}
Penttinen, A., Stoyan, D., and Henttonen, H.~M. (1992).
\newblock Marked point processes in forest statistics.
\newblock {\em Forest science} {\bf 38,} 806--824.

\bibitem[\protect\citeauthoryear{Platt, Evans, and Rathbun}{Platt
  et~al.}{1988}]{platt1988population}
Platt, W.~J., Evans, G.~W., and Rathbun, S.~L. (1988).
\newblock The population dynamics of a long-lived conifer (pinus palustris).
\newblock {\em The American Naturalist} {\bf 131,} 491--525.

\bibitem[\protect\citeauthoryear{Pommerening, Gon\c{c}alves, and
  Rodr{\'i}guez-Soalleiro}{Pommerening et~al.}{2011}]{20113358596}
Pommerening, A., Gon\c{c}alves, A.~C., and Rodr{\'i}guez-Soalleiro, R. (2011).
\newblock Species mingling and diameter differentiation as second-order
  characteristics.
\newblock {\em Allgemeine Forst- und Jagdzeitung} {\bf 182,} 115--129.

\bibitem[\protect\citeauthoryear{Pommerening and Grabarnik}{Pommerening and
  Grabarnik}{2019}]{pommerening2019individual}
Pommerening, A. and Grabarnik, P. (2019).
\newblock {\em Individual-based methods in forest ecology and management},
  volume 411.
\newblock Springer.

\bibitem[\protect\citeauthoryear{Pommerening and S{\"a}rkk{\"a}}{Pommerening
  and S{\"a}rkk{\"a}}{2013}]{pommerening2013mark}
Pommerening, A. and S{\"a}rkk{\"a}, A. (2013).
\newblock What mark variograms tell about spatial plant interactions.
\newblock {\em Ecological Modelling} {\bf 251,} 64--72.

\bibitem[\protect\citeauthoryear{Schaub, Haeni, Hug, Gessler, and
  Rigling}{Schaub et~al.}{2016}]{pfynwald:2016}
Schaub, M., Haeni, M., Hug, C., Gessler, A., and Rigling, A. (2016).
\newblock Tree measurements 2002-2016 from the long-term irrigation experiment
  pfynwald, switzerland.

\bibitem[\protect\citeauthoryear{Schlather, Riberio, and Diggle}{Schlather
  et~al.}{2004}]{Schlather2004}
Schlather, M., Riberio, P., and Diggle, P. (2004).
\newblock Detecting dependence between marks and locations of marked point
  processes.
\newblock {\em Journal of the Royal Statistical Society, Series B
  (Methodological)} {\bf 66,} 79--93.

\bibitem[\protect\citeauthoryear{Shimatani}{Shimatani}{2002}]{Shimatani:MoranI}
Shimatani, K. (2002).
\newblock Point processes for fine-scale spatial genetics and molecular
  ecology.
\newblock {\em Biometrical Journal} {\bf 44,} 325--352.

\bibitem[\protect\citeauthoryear{Stoyan}{Stoyan}{1984}]{DBLP:journals/eik/Stoyan84}
Stoyan, D. (1984).
\newblock Correlations of the marks of marked point processes - statistical
  inference and simple models.
\newblock {\em Elektronische Informationsverarbeitung und Kybernetik} {\bf 20,}
  285--294.

\bibitem[\protect\citeauthoryear{Stoyan and Penttinen}{Stoyan and
  Penttinen}{2000}]{stoyan2000recent}
Stoyan, D. and Penttinen, A. (2000).
\newblock Recent applications of point process methods in forestry statistics.
\newblock {\em Statistical science} pages 61--78.

\bibitem[\protect\citeauthoryear{Stoyan and Stoyan}{Stoyan and
  Stoyan}{1994}]{StoyanStoyan1994}
Stoyan, D. and Stoyan, H. (1994).
\newblock {\em Fractals, Random Shapes, and Point Fields: Methods of
  Geometrical Statistics}.
\newblock Wiley, Chichester, New York.

\bibitem[\protect\citeauthoryear{van Lieshout}{van Lieshout}{2011}]{van11}
van Lieshout, M.~N.~M. (2011).
\newblock A {J}-function for inhomogeneous point processes.
\newblock {\em Statistica Neerlandica} {\bf 65,} 183--201.

\end{thebibliography}

\end{document}